\newcommand{\Tmulti}{\mathbf{T}}
\newcommand{\subTmulti}[1]{\mathbf{T}_{#1}}
\newcommand{\subTa}{\subTmulti{a}}
\newcommand{\subTb}{\subTmulti{b}}
\newcommand{\subTsingle}[2]{\mathbf{T}_{#1}^{#2}}
\newcommand{\subTaf}{\subTsingle{a}{f}}
\newcommand{\subTbf}{\subTsingle{b}{f}}
\newcommand{\dist}[1]{\operatorname{dist}\left(#1\right)}
\newcommand{\distd}[1]{\operatorname{dist}_{d}\left(#1\right)}
\newcommand{\distdmax}[1]{\operatorname{dist}^{\max}_{d}\left(#1\right)}
\newcommand{\dims}[1]{\operatorname{dims}_{d}\left(#1\right)}
\newcommand{\Let}[2]{#1 $\leftarrow$ #2}
\DeclareMathOperator*{\argmin}{arg\,min}
\definecolor{red}{HTML}{EE4B2B}
\newcommand\vldbdoi{XX.XX/XXX.XX}
\newcommand\vldbpages{XXX-XXX}
\newcommand\vldbvolume{18}
\newcommand\vldbissue{10}
\newcommand\vldbyear{2025}
\newcommand\vldbauthors{\authors}
\newcommand\vldbtitle{\shorttitle} 
\newcommand\vldbavailabilityurl{https://github.com/aidaLabDEI/MOMENTI-motifs}
\newcommand\vldbpagestyle{empty}
\theoremstyle{definition}
\newtheorem{definition}{Definition}[section]
\theoremstyle{remark}
\newenvironment{corollary}[1][]
 {%
  \if\relax\detokenize{#1}\relax
  \else
    \ifcsname #1-used\endcsname
      \expandafter\xdef\csname #1-used\endcsname{\the\numexpr\csname #1-used\endcsname+1}%
    \else
      \expandafter\gdef\csname #1-used\endcsname{1}%
    \fi
    \renewcommand{\thecorollaryinner}{\ref{#1}.\csname #1-used\endcsname}%
  \fi
  \corollaryinner
 }
 {\endcorollaryinner}
\begin{document}

\title{MOMENTI: Scalable Motif Mining  in~Multidimensional~Time~Series}

\author{Matteo Ceccarello}\authornote{Corresponding author.}
\orcid{0000-0003-2783-0218}
\affiliation{%
  \institution{University of Padova}
  \streetaddress{Via Giovanni Gradenigo}
  \city{Padova}
  \state{Italy}
}
\email{matteo.ceccarello@unipd.it}

\author{Francesco Pio Monaco}
\orcid{0009-0003-6341-3021}
\affiliation{%
  \institution{University of Padova}
  \streetaddress{Via Giovanni Gradenigo}
  \city{Padova}
  \state{Italy}
}
\email{monacofran@dei.unipd.it}

\author{Francesco Silvestri}
\orcid{0000-0002-9077-9921}
\affiliation{%
  \institution{University of Padova}
  \streetaddress{Via Giovanni Gradenigo}
  \city{Padova}
  \state{Italy}
}
\email{francesco.silvestri@unipd.it}

\begin{abstract}
Time series play a fundamental role in many domains, capturing a plethora of information about the underlying data-generating processes.
When a process generates multiple \emph{synchronized} signals we are faced with \emph{multidimensional} time series.
In this context a fundamental problem is that of \emph{motif mining}, where we seek patterns repeating twice with minor variations, spanning some of the dimensions.
State of the art exact solutions for this problem run in time quadratic in the length of the input time series.

We provide a scalable method to find the top-$k$ motifs in multidimensional time series
with probabilistic guarantees on the quality of the results.
Our algorithm runs in subquadratic time in the length of the input, and returns the exact solution with probability at least $1-\delta$, where $\delta$ is a user-defined parameter.
The algorithm is designed to be \emph{adaptive} to the input distribution, self-tuning its parameters while respecting user-defined limits on the memory to use.

Our theoretical analysis is complemented by an extensive experimental evaluation, showing that our algorithm is orders of magnitude faster than the state of the art.

\end{abstract}

\maketitle

\pagestyle{\vldbpagestyle}
\begingroup\small\noindent\raggedright\textbf{PVLDB Reference Format:}\\
\vldbauthors. \vldbtitle. PVLDB, \vldbvolume(\vldbissue): \vldbpages, \vldbyear.\\
\href{https://doi.org/\vldbdoi}{doi:\vldbdoi}
\endgroup
\begingroup
\renewcommand\thefootnote{}\footnote{\noindent
This work is licensed under the Creative Commons BY-NC-ND 4.0 International License. Visit \url{https://creativecommons.org/licenses/by-nc-nd/4.0/} to view a copy of this license. For any use beyond those covered by this license, obtain permission by emailing \href{mailto:info@vldb.org}{info@vldb.org}. Copyright is held by the owner/author(s). Publication rights licensed to the VLDB Endowment. \\
\raggedright Proceedings of the VLDB Endowment, Vol. \vldbvolume, No. \vldbissue\ %
ISSN 2150-8097. \\
\href{https://doi.org/\vldbdoi}{doi:\vldbdoi} \\
}\addtocounter{footnote}{-1}\endgroup

\ifdefempty{\vldbavailabilityurl}{}{
\vspace{.3cm}
\begingroup\small\noindent\raggedright\textbf{PVLDB Artifact Availability:}\\
The source code, data, and/or other artifacts have been made available at \url{\vldbavailabilityurl}.
\endgroup
}

\section{Introduction}
Time series play a central role in modeling the evolution of data-generating processes in many domains.
To capture the multifaceted nature of processes generating data, time series are oftentimes \emph{multidimensional}: collections of co-evolving signals whose measurements are synchronized, collectively describing the evolution of the process.
Extracting information from such multidimensional time series is thus fundamental.

In particular, \emph{top-$k$ motifs} mining is a crucial and challenging problem: intuitively, the goal is to find patterns that occur twice with minor modifications, spanning many, but not all, the signals of the time series.
More formally, given a time series $\mathbf{T}$ with $D>1$ dimensions and $n$ points, the problem consists in finding $k$ pairs of $\mathbf{T}$'s subsequences with the smallest distance, where the distance captures the similarity between subsequences.
Indeed, similar patterns might  imply a particular behavior, making motif discovery a crucial step for higher-level analysis. Applications include forecasts for volcanic eruptions \cite{2012_Cassisi}, healthcare management \cite{health2015}, and machine management in industry \cite{renard:tel-01922186}.
\added{
In particular, multidimensional motif discovery is key in
pollution control~\cite{PollutionMotifs},
in quality control in industrial settings~\cite{SteelMotifs},
in activity discovery~\cite{ActivityDiscovery, ActivityDiscovery2},
and in healthcare~\cite{Healthcare, health2015}.
}

\added{A common approach to motif discovery is to extend the approach for motif mining in 1-dimension time series to the multidimensional case: that is, we look for pairs of subsequences of the time series where all the $D$ dimensions are similar.
However, this approach might not reveal interesting patterns: for instance, some dimensions might be noisy or uncorrelated with respect to the others, and they might hide similar patterns involving only a subset of the dimensions that are in general unknown beforehand.}
\added{ We aim to discover patterns that involve only a subset of the $D$ dimensions which is unknown and needs to be retrieved as well. This formulation overcomes the limitation of several previous approximate approaches that assumed all dimensions as equally informative.
}

Multidimensional motifs can then be discovered with the following naive approach:
for each of the $2^D$ subset of dimensions, we compare the $O(n^2)$ subsequences of $\mathbf{T}$ on the selected dimensions and we return the $k$ closest ones. 
This solution forces $O(2^D n^2)$ comparisons, which, for large sets of data, are clearly prohibitive.%

In this paper, we propose a scalable and efficient solution that aims at minimizing the number of distance computations to perform.
We leverage Locality Sensitive Hashing (LSH), a common technique in similarity search that groups together similar elements.
At a high level, we build an index of the time series where each multidimensional subsequence is mapped to a set of LSH hash values: by the properties of LSH this ensures that similar subsequences are more likely to hash to the same values across the dimensions spanned by the motif.
Thanks to this index we are able to prune a hefty amount of candidates from the search space.
We focus on comparisons based on the \textit{z-normalized Euclidean Distance}, however our approach can be generalized to other similarity measures.

One of the challenges of employing LSH is setting its parameters to ensure that good quality results are retrieved efficiently, since the precise setting of the parameters is data-dependent.
To overcome this challenge we design an index that automatically tunes its parameters depending on the data at hand while respecting user-specified limits on the memory to be used.

Our contributions are the following:
\begin{itemize}[leftmargin= 10pt]
    \item
        We design an approach for top-$k$ motif discovery in multidimensional time series,
        named \textsc{MOMENTI}.
        Our approach is based on Locality Sensitive Hashing and returns exact answers with a user-specified failure probability.
        Furthermore, we provide a theoretical analysis of the correctness of our approach and on its complexity in terms of distance computations carried out, along with several optimizations to speed up the execution.
    \item 
        We provide an open source implementation of our approach, that we use to carry out an extensive experimental evaluation.
        We show that our approach outperforms state-of-the-art baselines in terms of scalability,
        while providing high quality results.
\end{itemize}

\subsubsection*{Organization}
After reviewing the related work (\Cref{sec:relwork}) and introducing the background concepts of time series and motif discovery (\Cref{sec:ts}), we describe our approach (\Cref{sec:algo}). We then proceed in formalizing the details of our approach (\Cref{sec:complexity}) and we integrate optimizations to improve data adaptability and time complexity (\Cref{sec:opti}). Finally, we compare our proposal with other baselines and we extensively test our approach under different conditions to show its effectiveness (\Cref{sec:exp}).

\section{Related Work}
\label{sec:relwork}
A relatively large body of literature exists on time series and motif discovery, however the multidimensional case, despite aggregating a large interest, has seen only a fraction of the various works developed for the unidimensional case.

Comparisons between subsequences in time series are usually carried out by comparing their shapes rather than the raw values, in order to show invariance to noise and scale.
Several similarity measures have been used in time series processing, they can be split in two major subgroups: \textit{elastic} measures, that create a non linear one-to-many mapping between points of sequences, and \textit{lock-step} measures, where the mapping is one-to-one \cite{distances}. Between the most commonly used distances is \textit{Dynamic Time Warping} (DWT), an \textit{elastic} measure that allows the comparison of temporally misaligned sequences due to compression or stretching of shapes (i.e., warp in time).
\textit{Edit Distances}, are a family of distances that measure the number of \textit{edits} (e.g., substitutions, deletions, insertions) needed to obtain equal subsequences \cite{xiao2019edit}. Another common measure is the \textit{z-normalized Euclidean Distance}, a lock-step measure that z-normalizes the data before computing the Euclidean Distance, allowing variations in amplitude and mean values, so that the measured similarity is between the \textit{shapes} \cite{de2019implications}. 
This distance measure is, up to a constant factor, equivalent to the Pearson correlation coefficient \cite{berthold2016pearseucl}.

Many motif discovery techniques rely on symbolic abstraction of the raw data to facilitate matching of common patterns, besides smoothing out noise in the data. \textit{SAX} \cite{sax} found great success for its efficiency, requiring only a mean for the \textit{Piecewise Aggregate Approximation} and a table look-up for the symbol association.
It is less computationally complex than symbolization methods like \textit{ACA} \cite{sant2011symbolization}. Moreover, it is more general compared to methods like \textit{Persist} \cite{persist2005}, which require time series with a recognizable underlying structure \cite{sant2011symbolization}. We refer to the survey by \added{\citet{symbolizationreview}} for a complete overview on symbolization techniques.

Approximate algorithms for multidimensional motif discovery can be categorized in two major families: those that reduce in some way the time series into a unidimensional one and those that effectively work on multidimensional data.

Algorithms in the first category use techniques like \textit{Principal Component Analysis} (PCA) to generate a \textit{meta-}unidimensional time series that can be processed with the standard approaches developed for motif discovery in the univariate case. The work \replaced{in}{} \cite{tanaka2005discovery} employs \textit{Minimum Description Length} (MDL), to find the motifs. This approach is based on the strong assumption that all dimensions are relevant, as even a small number of noise dimensions leads to a meta-time series with little to no information, the  work of \cite{tanaka2005discovery} asserts how this algorithm \textit{can extract
a motif that can be recognized intuitively by human}, underlining how more work is needed when the structure of the time series is unknown. Moreover, the authors highlight the challenge of dynamically tuning the input parameters, since suboptimal sets can lead to poor outcomes in discovery.

The approaches that fall into the second category can be divided into two subfamilies: those who find motifs that span simultaneously in all dimensions \cite{berlin2012detecting} and those who find \textit{subdimensional motifs}. The second category is the one that allows the extraction of the most amount of information, since finding motif in all dimensions falls into a similar assumption of the algorithms that synthesize an univariate meta-time series, considering also irrelevant dimensions in the process.

For the task of subdimensional motif discovery, \replaced{\citet{4470297} introduced}{} the \textit{Random Projection} algorithm, which applies \textit{SAX} to independently symbolize each dimension of the time series, then a matrix of collisions between the subsequences is populated by iteratively random selecting a set of dimensions, creating words by concatenating the selected symbols and finding the matches. The algorithm has a linear running time in expectation, but it is greatly affected by the input threshold on the distance, a data dependent variable that is difficult to have an idea of without knowing in great details the data and the kind of motif searched.

The \textit{matrix profile} \cite{keoghMP} is commonly used to solve this problem exactly. It is a data structure that stores the distance between a subsequence and its nearest neighbor. The first motif can be identified by searching the minimal entry in the matrix, the second is the next minimum not overlapping with the first one, and so on. When discovery is limited to motifs with a certain dimensionality, the search can be restricted to only the $d$-th row of the matrix. To find the set of dimensions that span the motif, a variety of techniques can be used (e.g., finding the subset with minimal distance) with MDL being the one used by the state-of-the-art implementation \cite{Law2019}.

Very recently, a different variant of the problem has been introduced~\cite{DBLP:journals/pvldb/SchaferL24}:
the goal is to find a set of $k$ subsequences
of the input, spanning a limited number of dimensions and
minimizing the maximum pairwise distance of subsequences in the set.
 
Locality Sensitive Hashing (LSH) is a technique often employed in similarity search~\cite{lsh, 2020mining} which we will review in the next section.
Relevant for the scope of this paper is the family of hash functions for the Euclidean distance \cite{datar2004}.

LSH has already been used in the context of time series to discover motifs: \replaced{\citet{LSHearthquake}}{} employed LSH to derive fingerprints for earthquake waveforms.
In the one dimensional case \citet{attimo} provided an algorithm with guarantees on recall by employing the properties of LSH. Subsequences are matched using their \textit{fingerprints}, the distances are verified only on matching fingerprint pairs, allowing the algorithm to compute just a fraction of all the distance computations.

Using LSH requires setting up a number of parameters. The framework developed by PUFFIN \cite{puffinn} is capable of automatically tuning these parameters for $k$-nearest neighbors queries. \replaced{\citet{confsampling}}{} developed a technique to automatically find the best number of concatenations and repetitions to successfully answer a nearest neighbor query with a certain probability even when the probability of collision between pairs of points is unknown. We follow the \replaced{PUFFINN implementation~\cite{puffinn}}{}, expanding on the case where the pairs are actually ordered sets.

\begin{figure*}[t]
    \centering
    \includeinkscape[width=\linewidth]{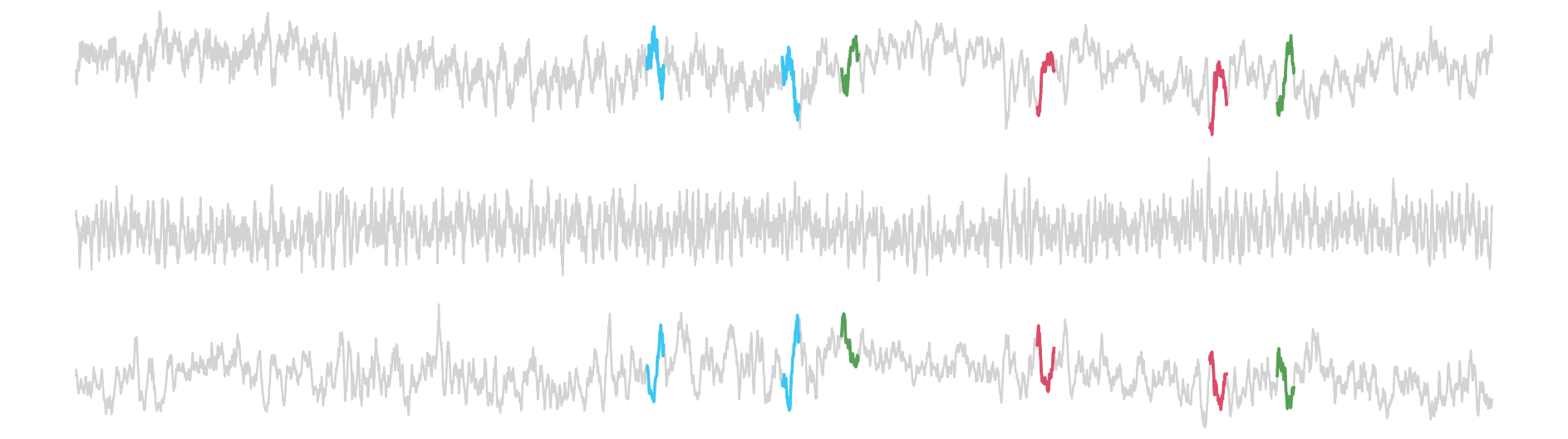_tex}
    \caption{
        Multidimensional time series from an industrial evaporator \cite{DaISyEVAP}.
        The top-3 two dimensional motifs are highlighted. }
    \label{fig:multits}
    \Description[A time series with 3 dimensions, 3 motifs are highlighted in the first and third dimensions.]{The first and third dimension represent steam pressure and temperature, the second is an indicator that acts like noise in this context. The motifs represent some normal behaviors where one dimension increases while the other decreases.}
\end{figure*}
\section{Preliminaries}
\label{sec:ts}
\begin{table}[t]
    \centering
    \caption{Table of symbols.}
    \begin{tabular}{p{.2\columnwidth}p{.7\columnwidth}}
        \specialrule{1pt}{0pt}{0pt}
        $\Tmulti$ & Multidimensional time series \\
        $\mathbf{T}^w$ & Set of subsequences from $\mathbf{T}$ of length $w$\\
        $\subTa$ & Subsequence at index $a$\\
        $\subTaf$ & Dimension $f$ of subsequence $\subTa$\\
        $D$ & Dimensionality of the time series \\
        $d$ & Dimensionality of the motifs to discover \\
        $\dist{\cdot, \cdot}$ & Distance function \\
        $\distd{\cdot, \cdot}$ & $d$-dimensional distance function \\
        $\dims{\cdot, \cdot}$ & dimensions involved in the computation of the $d$-dimensional distance function \\
        $k$ & Number of motifs to find \\
        \hline
        $h_{i,j}\left(\subTaf \right)$ & hash value of length $i$ at repetition $j$
                                         for dimension $f$ of subsequence $\subTa$ \\
        $L$ & Number of repetitions for LSH \\
        $K$ & Number of concatenations for LSH \\
        \specialrule{1pt}{0pt}{0pt}
    \end{tabular}
    \label{tab:notation}
\end{table}
\subsection{Time Series and Motifs}
We will start our introduction from the special case of a single dimension time series.
\begin{definition}
        \textup{A} \textit{time series} $T\in \mathbb{R} ^n$ \textup{is an ordered sequence of real valued numbers} $T = [t_1,...\text{ }t_n]: t_u\in \mathbb{R} \text{, }\forall i\in[1,n]$ \textup{where $n$ is the length of the time series}.
\end{definition}
\begin{definition}
    \textup{A \textit{time series subsequence} $T_{u,w}\in \mathbb{R}^w$ is a subset of adjacent elements from T starting at position $u$ and length $w$, $T_{u,w}=[t_u,...\text{ }t_{u+w-1}]$.
    }
\end{definition}
\begin{definition}
    The \textit{z-normalized Euclidean Distance} between two time series subsequences $T_{u,w}$ and $T_{v,w}$ is defined as:
    \begin{equation}
    \label{eq:zned}
    \dist{T_{u,w},T_{v,w}} = 
    \sqrt{\sum_{i\in[w]}
        \left( \frac{T_{u}(i)-\mu(T_{u,w})}{\sigma(T_{u,w})}
        -
        \frac{T_{v}(i)-\mu(T_{v,w})}{\sigma(T_{v,w})} \right)^2
    }
    \end{equation}
    where $\mu$ and $\sigma$ are the means and the standard deviations of the subsequences, respectively, and $T_{u}(i)$ is the $i$-th value of subsequence $T_{u,w}$.
\end{definition}
\begin{definition}
    \textup{A \textit{time series motif} is the pair of subsequences $T_{u,w}$ and $T_{v,w}$: \begin{equation*}
        \dist{T_{u,w},T_{v,w}}\leq \dist{T_{a,w},T_{b,w}}
    \end{equation*} for all $a,b \in [1,...\text{ }n-w+1]$.
    }
\end{definition}
It comes that overlapping sequences are more likely to satisfy the condition above, these pairs are excluded in the problem formulation by ensuring no trivial matches.
\begin{definition}
    A pair of subsequences $T_{u,w}$ and $T_{v,w}$ is a \textit{trivial match} when: $v=u$ or 
        if, given an exclusion zone $\epsilon>0$, then $|u-v|\le\epsilon$. 
\end{definition}
Common values for the exclusion zone $\epsilon$ are fractions of $w$ to exclude matches in the neighboring area of a subsequence \cite{mp1,mp2}. We employ $\epsilon=w/2$.

Expanding upon the single dimensional case, the notions for the \textit{multidimensional} case are derived.
\begin{definition}
    \textup{A \textit{multidimensional time series} $\textbf{T}\in \mathbb{R}^{n\times D}$ is a tuple of $D$ time series $T^{(j)}\in \mathbb{R}^n : \textbf{T}=\left(T^{(1)},...\text{ }T^{(D)}\right)$ where $D$ is the dimensionality of $\textbf{T}$ and $n$ is its length.   
    }
\end{definition}
\begin{definition}
    \textup{A \textit{multidimensional subsequence} $\textbf{T}_{u,w}\in \mathbb{R}^{w\times D}$ is a tuple of unidimensional subsequences from $\textbf{T}$ starting at position $u$ and length $w$, $\textbf{T}_{u,w}=\left(T^{(1)}_{u,w},...\text{ }T^{(D)}_{u,w}\right)$.
    }
\end{definition}
When clear from context we will omit the subscript $w$ from the subsequence notation.

Given two multidimensional subsequences, we are interested in computing their distance.
However, as we mentioned oftentimes considering all dimensions in this computation is uninteresting at best, misleading at worst.
Following \cite{keoghMP} we therefore consider only a subset of $d$ dimensions
when computing the distance.

\begin{definition}
\label{def:distdims}
The $d$-dimensional distance between two subsequences $\subTa$ and $\subTb$
is
\[
\distd{\subTa, \subTb}
=
\min_{F \subseteq 2^{[D]}, |F| = d} \sum_{f \in F} \dist{\subTaf, \subTbf}
\]
\end{definition}
In other words, we select a subset of the dimensions of size $d$ such that the sum of the distances between the individual dimensions is minimized.
Similarly, with
\[
\dims{\subTa, \subTb}
=
\argmin_{F \subseteq 2^{[D]}, |F| = d} \sum_{f \in F} \dist{\subTaf, \subTbf}
\]
we denote the $d$ dimensions that define the distance between two subsequences.
Note that this set of dimensions is potentially different for different pairs of subsequences.
Furthermore, for a given pair of subsequences we name the distance of the dimensions maximally far apart among the ones belonging to $\dims{\subTa, \subTb}$:
\begin{equation}\label{eq:distdmax}
\distdmax{\subTa, \subTb} =
\max_{f \in \dims{\subTa, \subTb}}
\dist{
    \subTaf, \subTbf
}
\end{equation}

\begin{definition}
    \label{def:dist}
    A $d$-dimensional motif is the pair of subsequences $\subTa$, $\subTb$
    such that
    \[
        \distd{\subTa, \subTb}
        \le
        \distd{\subTmulti{u}, \subTmulti{v}}
        \quad
        \forall u, v \in [n-w+1]
    \]
\end{definition}

We are interested in finding the most similar subsequences in an unknown subspace.
\begin{definition}
\textup{Given a $D$-dimensional time series $\mathbf{T}$, motif length $w$, motif dimensionality $d$, and a distance function $dist$, the \textit{top-k multidimensional motifs} are the $k$ subsequences and their subspaces $F$ that minimize the $d$-dimensional distance with respect to all other subsequences of length $w$ in $\mathbf{T}$, ensuring no trivial matches between all possible pair of indices that are part of the motif pairs (i.e., no motif overlaps with another).}
\end{definition}

\begin{example}
    \Cref{fig:multits} reports an example of multidimensional motif discovery:  readings of different sensors from an industrial evaporator \cite{DaISyEVAP}. We highlight the top-3 motifs of length $w = 75$ and dimensionality $d=2$.
    Notice how the middle signal does not participate in the motifs.
\end{example}

\subsection{Locality Sensitive Hashing}

A powerful technique for approximate similarity search in high-dimensional space is Locality Sensitive Hashing (LSH for short), which we briefly introduce here.
Given that the subsequences of length $w$ of a time series can be seen as vectors in $\mathbb{R}^w$, this will prove a useful tool in this setting as well.
For an in-depth discussion of LSH, refer to \cite{lsh, wang2014hashingsimilaritysearchsurvey}.
We provide, as online supplemental material, a short interactive LSH primer\footnote{\url{https://www.dei.unipd.it/~ceccarello/MOMENTI-supplemental/}}.

Intuitively, LSH partitions a set of vectors randomly in such a way that close vectors are more likely to end in the same part than far away vectors.
To formalize this intuition, the definition below introduces a distance threshold $R$:
vectors that are closer to each other than $R$ are considered close, and vectors farther than
$cR$ are considered far away.

\begin{definition}[Locality Sensitive Hashing~\cite{lsh}]
\label{def:lsh}
Let $(\mathcal{X}, \operatorname{dist})$ be a metric space
and $\mathcal{H}$ be a family of functions $h: \mathcal{X} \to U$ for some set $U$. 
For a distance threshold $R$ and a constant $c>1$ the family $\mathcal{H}$ is called
$(R,cR,p_1,p_2)$-\textit{locality sensitive}
if $\forall x,y \in \mathcal{X}$ and
for $h$ sampled from $\mathcal{H}$:
\begin{equation}
\label{eq:lsh}
\begin{split}
    \text{if }
        \operatorname{dist}(x, y)\leq R 
        ~~\text{ then } ~~
        \Pr_{h\sim \mathcal{H}}\left[h\left(q\right)=h\left(p\right)\right]\geq p_1 \\
    \text{if }
        \operatorname{dist}(x, y)\geq cR 
        ~~\text{ then }~~
        \Pr_{h\sim H}\left[h\left(q\right)=h\left(p\right)\right]\leq p_2
\end{split}
\quad.
\end{equation}
\end{definition}
The event of two vectors having the same hash value is called a \emph{collision}.
A key quantity to assess the performance of LSH families is
\[
\rho=\frac{\log 1/p_1}{\log 1/p_2}
\]
A small $\rho$ value entails that the LSH family is effective at discerning close vectors from far-away ones.

For the common case of the Euclidean distance considered in this paper a widely used LSH family is that of \emph{Discretized Random Projections}~\cite{datar2004}.
For a vector $x\in \mathbb{R}^w$ and quantization parameter $r\in \mathbb{R}^+$ the hash function is
\begin{equation}
    h(x) = \left\lfloor \frac{a\cdot x+b}{r} \right\rfloor
    \label{eq:drp}
\end{equation}
where $a\in\mathbb{R}^w$ is a vector with random components following the $\mathcal{N}(0,1)$ Gaussian distribution, and $b\in\mathbb{R}$ is chosen uniformly at random in the interval $[0,r]$.

The probability that two vectors $x,y$ at Euclidean distance $R$ collide is:
\begin{equation}\label{eq:eucl-collision-probability}
    \Pr_{h\sim\mathcal{H}}\left[h(x) = h(y)\right] = 1-2\cdot norm\left(-\frac{r}{R}\right)-\frac{2}{\sqrt{2\pi}~r/R}\left(1-e^{-\left(\frac{r^2}{2R^2}\right)}\right)
\end{equation}
where $norm$ is the cumulative distribution function of a \textit{Standard normal distribution} \cite{datar2004}.
For this family of LSH functions we have $\rho = 1/c$ \cite{rhoboundlsh},
a fact that we will use in the analysis of the complexity of our algorithm.

As a notational shorthand, for two vectors $x$ and $y$ at distance $\dist{x, y}$ we define
\[
P(\dist{x, y}) = \Pr_{h\in \mathcal{H}}[h(x) = h(y)] ~.
\]

In order to \emph{amplify} the gap between the collision probability of close vectors
(at distance $\le R$) and far vectors (at distance $\ge cR$) a common strategy
is to create a \emph{composite} hash function by sampling $K$ hash functions and concatenating their outputs in a tuple of length $K$:
\[
    h'(x) = \langle
        h_1(x), h_2(x), \dots, h_K(x)
    \rangle
\]
The resulting LSH family is $(R, cR, p_1^K, p_2^K)$-locality sensitive.
Using larger values of $K$ lowers the probability for both close and far points to collide, with a more marked effect on the latter.
Repeating this process with $L$ independent composite hash functions implies that
points at distance smaller than $R$ collide in at least one of the repetitions with probability at least $1-\left(1- p_1^K\right)^L$.

Setting the parameters $K$ and $L$ requires the knowledge of the distance threshold
$R$, which in our setting is the distance of the top-$k$ motif.
Of course we do not know this distance beforehand, therefore in the following we describe
an algorithm that automatically tunes them based on the input.

\begin{algorithm}[t]
    \SetKwInOut{Input}{\raggedright{Input}}\SetKwInOut{Output}{Output}
    \Input{$D$-dimensional time series $\mathbf{T}$, subsequence length $w$, number of motifs $k$, $K$ max allowed number of concatenations, $L$ max number of repetitions, dimensionality of the motifs to find $d$, failure probability $\delta$}
    \Output{\{Set of the top-$k$ motifs\}, with probability $1-\delta$}
    \tcp{Initialization}
    \For{$\mathbf{T}_a \in \mathbf{T}^w,$ \added{$f \leftarrow 1 \text{ to } D, j\leftarrow 1 \text{ to } L$}}{
            \label{ln:hash-computation}
        \added{compute $h_{K,j}(\subTaf)$}\;
    }
    \BlankLine
    
    \texttt{TOP = PriorityQueue()}\label{ln:priority-queue-init}\;
    \For{$i\leftarrow K \text{ to } 1$}{ \label{ln:prefix-cycle}
        \For{$j\leftarrow 1 \text{ to } L$}{
            \Let{$E$}{$\emptyset$}\label{ln:define-edges}\;
            \For{\added{$f \leftarrow 1 \text{ to } D$}}{%
                \For{$(\subTa, \subTb) \in \Tmulti^w \times \Tmulti^w : h^f_{i,j}(\subTaf) = h^f_{i,j}(\subTbf)$}{
                    \label{ln:prefix-eq}
                    \uIf{$(\subTa, \subTb) \notin E$}{
                        \Let{$E$}{$E \cup \{(a,b)\}$}\;
                        \Let{$W(a,b)$}{1}\;
                    }
                    \Else {
                        \Let{$W(a,b)$}{$W(a,b)+1$}\label{ln:increment-weight}\;
                    }
                } 
            }
            \For{$(a, b) \in E$}{
                \label{ln:second-start}
                \If{$W(a, b) \ge d$}{
                    \label{ln:weigth-constr}
                    \texttt{TOP.insert}$\left((\subTa,\subTb)\right)$\;
                    \If{$|$\texttt{TOP}$|$>k}{
                        \texttt{TOP.pop()} \label{ln:second-end}
                    }
                }
            }
            \If{$|$\texttt{TOP}$|=k$ $\wedge$ \texttt{STOP}$\left(\text{\texttt{TOP.max()}},i,j,\delta\right)$}
            {\Return \texttt{TOP}}
        }
    }
    \Return true top-$k$ by computing all pairs\;
\caption{MOMENTI}
\label{alg:emitaggr}
\end{algorithm}
\begin{algorithm}[t]
    \caption{Stopping condition\label{alg:stopping-condition}}
    \SetKwProg{Fn}{Function}{ is}{end}
    \Fn{\texttt{STOP}$((\subTa, \subTb), i, j, \delta)$}{
        \Let{$p$}{
            $P\left(\distdmax{\subTa, \subTb}\right)^{d}$
        }\label{ln:collision-probability}\;
        \lIf{$i=K$}{
            \Return 
                $\left(1-p^i\right)^j \le \delta$ \label{ln:case-k}
        }\lElse {
            \Return 
                $\left(1-p^i\right)^j \cdot \left(1-p^{i+1}\right)^{L-j} \le \delta$
                \label{ln:case-shorter-than-k}
        }
    }
\end{algorithm}

\section{Algorithm}
\label{sec:algo}

We now describe our algorithm, named \textsc{MOMENTI}
(for \emph{MOtifs in MultidimEnsioNal TImeseries}),
to find the top-$k$ motifs in a multidimensional time series. Our algorithm has a user-defined error probability $\delta$.
At a high level, our algorithm is comprised of two main phases.
First, it builds a LSH-based index of the time series subsequences, where the dimensions of all subsequences are hashed independently.
Then the index is traversed to discover candidate motif pairs, until a data-dependent stopping condition is met.
The pseudocode of our algorithm is presented in Algorithm~\ref{alg:emitaggr}.

\paragraph{Index construction}
As we have seen in the previous section, two critical parameters in an LSH setup are the number of concatenations $K$ and the number of repetitions $L$.
Furthermore, the hash function of Equation~\eqref{eq:drp} requires a quantization parameter $r$ to be set as well. We shall see how to automatically set $r$ later in Section~\ref{sec:opti}.

For a time series $\Tmulti$, setting $K$ and $L$ so to minimize the number of distance computations requires the knowledge of the motif distance \emph{before} we construct the index.
To work around the fact that we, of course, do not know this distance beforehand, our index instead sets the \emph{maximum} $K$ and $L$ values to be used in the second phase.

To construct the index of the subsequence of length $w$ of a multivariate time series $\Tmulti$
we first sample multiple independent composite hash functions of length $K$: one for each dimension $f\in[D]$ and repetition $j\in[L]$. As described in the previous section, this is achieved by simply sampling a random vector $a$ and a random value $b$ for each function.
We denote the composite hash function at repetition $j\in[L]$ for dimension $f\in[D]$ with
$h_{K,j}^f$.

Then, for each subsequence $\subTa$, we compute multiple independent hash values:
for each dimension $f\in[D]$ and for each repetition $j\in[L]$, we compute
the \emph{composite} hash value of length $K$ of the vector $\subTaf$ using the corresponding hash function, that is we compute $h_{K,j}^f(\subTaf)$.

Note that a composite hash value can be seen as a string of $K$ integer values.
A fundamental operation in the next phase will be retrieving, for a given dimension $f\in[D]$
and repetition $j\in[L]$, all the subsequences whose hash share the prefix of a given length.
\added{To efficiently support this operation, we construct an index on hash values consisting of a family of ordered vectors. Specifically, for each dimension $f\in[D]$ and each repetition $j\in[L]$, each vector indexes the string of $K$ hash values of all subsequences. By doing so, hash values with the same prefix appear in contiguous ranges of the array, providing higher locality of reference.}
In the following, for $0 <i \le K$ we denote with $h_{i, j}^f(\subTaf)$ the prefix
of length $i$ of the hash value for the subsequence dimension $\subTaf$ in repetition $j$.

\paragraph{Index traversal}

A key property of the index introduced in the previous paragraph is that very similar subsequences are likely to share long prefixes of their hashes.
At the same time, due to the probabilistic nature of LSH it is not certain that similar
subsequences share a long prefix in the first repetition.

We first give the intuition behind the algorithm, and then give all the details.
Starting from the longest possible hash prefix, $K$, the $L$ repetitions are considered,
focusing on each one on all pairs of subsequences
sharing the same hash prefix: for each such pair we can compute the distance, which is used to rank candidate motif pairs in a priority queue.
Furthermore, from the distance we can derive the collision probability by means of Equation~\eqref{eq:eucl-collision-probability}.
We use this probability to define a stopping condition that allows to rule out the event that the true motif has \emph{not} been seen in the repetitions considered so far.
If the probability of this negative event is less than a user defined threshold $\delta$ and the priority queue contains at least $k$ elements then
the algorithm stops, returning the top-$k$ pairs in the priority queue.
Otherwise the next iteration is considered, with a caveat: if the last of the $L$ repetitions is reached with the stopping condition not satisfied, then it means that hashes of length $K$ are too selective for the dataset at hand. Therefore, the process is restarted considering prefixes of length $K-1$. This procedure continues, potentially considering shorter and shorter prefixes, until the stopping condition is met.

The above high level intuition is complicated by the fact that we have to deal with multidimensional subsequences whose dimensions are hashed independently. In the following we thus detail the algorithm with reference to the pseudocode in Algorithm~\ref{alg:emitaggr}.

The algorithm proceeds in rounds from $K$ to $1$: in round $i$ hash prefixes of
length $i$ are considered.
In each round $i$ then all repetitions $1\le j \le L$ are considered,
and each such iteration is comprised of three steps:
counting the number of collisions, computing distances, and checking the stopping condition.

The first step (line~\ref{ln:define-edges} to~\ref{ln:increment-weight}) counts for each pair of subsequences the number of dimensions in which they share a prefix of length $i$.
Note that retrieving the pairs of subsequences colliding in each dimension (line~\ref{ln:prefix-eq}) is done efficiently by leveraging the fact that hash values are stored in \replaced{lexicographically sorted arrays}{}.
By the end of the first step the algorithm has built a set $E$ of pairs of subsequences
that collided in at least one dimension, and for each pair $(a,b) \in E$ the function
$W(a,b)$ reports the number of dimensions on which $\subTa$ and $\subTb$ collided.

In the second step the focus (lines from~\ref{ln:second-start} to~\ref{ln:second-end})
is on the pairs that share prefixes in at least $d$ dimensions out of $D$, where $d$ is the target number of dimensions spanned by the motifs.
The intuition is that a motif pair $(\subTa, \subTb)$, being similar in at least $d$ dimensions, will have $W(a, b) \ge d$.
For each such pair, the algorithm computes the distance and updates the priority queue of candidates, keeping only the top-$k$ in memory.

Finally, if the priority queue contains $k$ candidates then the algorithm checks the stopping condition (Algorithm~\ref{alg:stopping-condition}), which considers the pair at maximum
distance in the priority queue. In particular, for this pair $(\subTa, \subTb)$ the stopping condition focuses on $\distdmax{\subTa, \subTb}$ as defined in Equation~\eqref{eq:distdmax}.
This distance is used to compute an upper bound $p$ to the probability of the two subsequences colliding in $d$ dimensions at the same time independently (line~\ref{ln:collision-probability}).
This probability $p$ is then used to compute the probability that a pair of subsequences with a smaller pairwise distance was missed by the algorithm in all the previous iterations.
If this probability is smaller than the user-defined $\delta$, then the algorithm can successfully terminate.
In particular, line~\ref{ln:case-k} is evaluated when the full hashes are being considered, and only the first $j$ repetition have been executed.
Line~\ref{ln:case-shorter-than-k} is executed when prefixes shorter than $K$ are under consideration, and takes into account the fact that the algorithm executed $L-j$ iterations with prefixes of length $i+1$.

\added{Note that the iterative process of \crefrange{ln:prefix-cycle}{ln:weigth-constr} inherently exhibits \textit{anytime} properties, as the priority queue only allows the insertion of better solutions. The user can stop at any time the discovery to retrieve the candidate motifs with their error probabilities. }

 \paragraph{Example 4.1}
\added{
 Consider the time series shown in \Cref{fig:multits}, let $\mathbf{T}_a, \mathbf{T}_b$ be the names of the subsequences belonging to the first motif (in red) and $\mathbf{T}_c, \mathbf{T}_d$ to the second (in blue), \texttt{TOP} be the queue where we store the top motif with $k=1$, $K=4$, $L=2$ and $\delta=0.1$. Let us assume that there exist no other collisions other than the two motifs. During the iteration at $i=4$, $j=1$ we will scan the hashes for each dimension and update our weights (lines 7-13), we find $W(\mathbf{T}_a, \mathbf{T}_b)=W(\mathbf{T}_c, \mathbf{T}_d)=2$. The first couple will be stored in the queue since it has a lower distance (lines 14-18).
 At this point (line 19), \texttt{STOP}$\left(\distd{\mathbf{T}_a, \mathbf{T}_b}, 4, 1\right) \geq \delta$ so the condition is not satisfied. This happens again for $j=2$, when we find $(\mathbf{T}_c, \mathbf{T}_d)$ we evaluate the distance again because we have no memory of the past, and discard it for its distance is greater than the couple stored in \texttt{TOP}. Since the hashes of length $i=4$ were unable to satisfy the stopping condition, we move to prefix length $i=3$.  This time, for both $j=1$ and $j=2$, we find $(\mathbf{T}_a, \mathbf{T}_b)$ and $(\mathbf{T}_c, \mathbf{T}_d)$ with weight $2$, we will compare them again even though they already have been evaluated during the previous step, we will later implement a way to avoid this operation and save resources.
 At the end of repetition $j=2$ we find that the bound on the failure probability of the pair stored in the queue is satisfied and the algorithm stops.
}

\section{Analysis}
\label{sec:complexity}
In this section, we derive the probabilistic guarantees for our algorithm as well as analyze its complexity.

\begin{lemma}\label{lem:collision-probability}
Given a pair of subsequences $\subTa, \subTb$ and parameter $d_m$,
consider iteration
$i$ of the outer loop of \Cref{alg:emitaggr}.
Then, we have $W(a, b) \ge d$ with probability at least
$
    P\left(\distdmax{\subTa, \subTb}\right)^{i \cdot d}
$
where $\distdmax{\cdot,\cdot}$ is defined as in Equation~\eqref{eq:distdmax}.
\end{lemma}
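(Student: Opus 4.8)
The plan is to lower-bound $\Pr[W(a,b) \ge d]$ by restricting attention to the single subspace that realizes the $d$-dimensional distance, namely $F^\star = \dims{\subTa, \subTb}$. Recall that within a repetition of outer-loop iteration $i$, the quantity $W(a,b)$ counts the number of dimensions $f \in [D]$ in which $\subTaf$ and $\subTbf$ collide on their length-$i$ hash prefix. The key observation is that if all $d$ dimensions in the fixed set $F^\star$ collide on their length-$i$ prefixes, then certainly $W(a,b) \ge d$; hence the probability of this all-collide event is a valid lower bound for $\Pr[W(a,b) \ge d]$, since collisions occurring in dimensions outside $F^\star$ only increase $W$.

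First I would analyze a single dimension $f \in F^\star$. By construction the length-$i$ prefix is a concatenation of $i$ independently sampled LSH functions, so using the shorthand $P(\cdot)$ and the independence of the concatenated hashes (the amplification property mapping $p_1 \mapsto p_1^i$), the probability that $\subTaf$ and $\subTbf$ agree on their whole length-$i$ prefix is exactly $P(\dist{\subTaf, \subTbf})^i$. Next I would pass from these $d$ distinct per-dimension distances to the single quantity $\distdmax{\subTa, \subTb}$: by \eqref{eq:distdmax}, every dimension $f \in F^\star$ satisfies $\dist{\subTaf, \subTbf} \le \distdmax{\subTa, \subTb}$, and since the Euclidean-LSH collision probability $P(\cdot)$ of \eqref{eq:eucl-collision-probability} is monotonically non-increasing in the distance, this yields the uniform bound $P(\dist{\subTaf, \subTbf})^i \ge P(\distdmax{\subTa, \subTb})^i$ for each $f \in F^\star$.

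Finally, because the hash functions across the $D$ dimensions are drawn from independent random vectors $a$ and offsets $b$, the per-dimension collision events are mutually independent, so the probability that all $d$ dimensions of $F^\star$ collide factorizes into a product of $d$ terms. Combining the factorization with the uniform per-dimension bound gives $\prod_{f \in F^\star} P(\dist{\subTaf, \subTbf})^i \ge P(\distdmax{\subTa, \subTb})^{i \cdot d}$, which lower-bounds $\Pr[W(a,b) \ge d]$ as claimed.

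The main obstacle — really the only subtle point — is justifying the two-step reduction: one must be careful to lower-bound using the \emph{optimal} subspace $F^\star$ rather than an arbitrary size-$d$ subset, and then to collapse the $d$ distinct per-dimension distances into $\distdmax{\subTa, \subTb}$ via the monotonicity of $P$. The independence across dimensions, which makes the product step an exact factorization before the inequality is applied, follows directly from the index construction described in \Cref{sec:algo}, where fresh randomness is sampled for each (dimension, repetition) pair. Everything else is routine.
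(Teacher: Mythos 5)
Your proposal is correct and takes essentially the same route as the paper's own proof: restrict to the optimal subspace $\dims{\subTa, \subTb}$, factorize the all-collide probability across dimensions by independence, and lower-bound each factor by $P\left(\distdmax{\subTa, \subTb}\right)^i$ via monotonicity of the collision probability. If anything, your write-up is more explicit than the paper's (which leaves the independence-based factorization and the implication ``all of $F^\star$ collide $\Rightarrow W(a,b)\ge d$'' implicit), so no gap to report.
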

\begin{proof}
    To have $W(a,b) \ge d$ we need the two subsequences to collide in at least $d$ dimensions.
    \newline
    Consider the distance $\distdmax{\subTa, \subTb}$ and recall that, by definition, all dimensions
    $F=\dims{\subTa, \subTb}$ will be at a closer distance.

    At iteration $i$ of the outer loop of \Cref{alg:emitaggr} we have that the two subsequences collide in all dimensions $F$ at the same time with probability at least
    $
    \prod_{f\in F} P\left(\dist {\subTaf, \subTbf}\right)^i
    $
    given that $\bar{f}$ is the dimension of maximum distance out of the $d$ ones in $F$, each
    factor of the above product is lower bounded by 
    $P\left(\subTsingle{a}{\bar{f}},\subTsingle{b}{\bar{f}}\right)$,
    hence the statement follows.
\end{proof}

\subsection{Correctness}

To prove correctness we need to show that motif pairs are considered at least once before the algorithm terminates.

The following lemma bounds the probability that a given pair of subsequences at indices $a$ and $b$
never has a weight $W(a,b) \ge d$. In other words, the following lemma bounds the probability that a pair is never considered for inclusion in the \texttt{TOP} priority queue.

\begin{lemma}
    \label{lemma:stop}
    Let $(\subTa, \subTb)$ be a pair of subsequences and let
    \[
    \begin{aligned}
    p = P\left(\distdmax{\subTa,\subTb}\right)^d
    \end{aligned}
    \]
    Consider iteration $i$ out of $K$ of the outer loop
    and iteration $j$ out of $L$ of the inner loop of Algorithm~\ref{alg:emitaggr}.
    The probability that $W(a,b) < d$ in all previous iterations is
    upper bounded by
    \begin{equation}
    \label{eq:probnotcolliding}
      \left\{
        \begin{aligned}
        &(1-p^i)^j \quad &\textrm{ if }~~ i=K \\
        &(1-p^i)^j \cdot (1-p^{i+1})^{L-j} \quad&\textrm{ otherwise}
        \end{aligned}
      \right.
    \end{equation}
\end{lemma}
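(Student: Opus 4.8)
The plan is to lift the single-repetition guarantee of \Cref{lem:collision-probability} to all iterations executed so far. That lemma gives, at outer level $\ell$, that $W(a,b) \ge d$ holds with probability at least $P\left(\distdmax{\subTa,\subTb}\right)^{\ell d} = p^{\ell}$ in a single repetition; writing $M_{\ell,j'}$ for the complementary \emph{miss} event (level $\ell$, repetition $j'$), this reads $\Pr[M_{\ell,j'}] \le 1 - p^{\ell}$. First I would pin down exactly which (level, repetition) pairs have been processed once the algorithm reaches inner iteration $j$ of outer iteration $i$: for $i<K$ these are every repetition $1,\dots,L$ at each prefix length $K, K-1,\dots,i+1$, together with repetitions $1,\dots,j$ at prefix length $i$. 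The quantity to bound is the probability of the intersection of all the corresponding miss events.

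The hard part is that these miss events are \emph{not} mutually independent: within a fixed repetition $j'$, the length-$i$ prefix is a prefix of the length-$(i+1)$ one, so $M_{i,j'}$ and $M_{i+1,j'}$ are strongly correlated and a naive product over all (level, repetition) pairs would be unjustified. The idea I would use to sidestep the correlation is the monotonicity of collisions in the prefix length: colliding in $d$ dimensions at a longer prefix forces a collision in those same dimensions at every shorter prefix, so $W(a,b)$ can only increase as the prefix shrinks, giving the nesting $M_{i,j'} \subseteq M_{i+1,j'} \subseteq \dots \subseteq M_{K,j'}$. Intersecting the miss events belonging to a single repetition therefore collapses to the event at that repetition's shortest executed prefix.

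Concretely, for a repetition $j' \le j$ (run all the way down to level $i$) the collapsed event is $M_{i,j'}$, whereas for $j' > j$ (run only down to level $i+1$) it is $M_{i+1,j'}$. Since distinct repetitions draw independently sampled hash functions, these collapsed per-repetition events \emph{are} mutually independent across $j'$, so I can finally take the product of their individual bounds: $\prod_{j'=1}^{j}(1-p^i)\cdot\prod_{j'=j+1}^{L}(1-p^{i+1}) = (1-p^i)^j(1-p^{i+1})^{L-j}$, which is the claimed bound for $i<K$. The case $i=K$ is the boundary version in which no longer prefix has yet been processed, so only the repetitions $1,\dots,j$ at level $K$ contribute and the product reduces to $(1-p^K)^j$. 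I expect the collapsing-by-nesting step to be the crux, since it is precisely what turns an intractable correlated intersection into the clean independent product appearing in the statement and in the stopping rule of \Cref{alg:stopping-condition}.
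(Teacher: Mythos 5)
Your proof is correct and takes essentially the same route as the paper's (sketched) proof: the single-repetition bound from Lemma~\ref{lem:collision-probability}, independence across repetitions, and a product of $j$ factors at prefix length $i$ with $L-j$ factors at length $i+1$. Your collapsing-by-nesting step---observing that within one repetition a miss at a shorter prefix subsumes the misses at all longer prefixes, so only the shortest executed prefix per repetition matters---is exactly the justification for why only levels $i$ and $i+1$ appear in the bound, a point the paper's sketch leaves implicit.
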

\begin{proof}Sketch of the proof:
from Lemma~\ref{lem:collision-probability} we know the probability of the pair to collide in a single repetition with prefixes of length $i$. For the given pair the failure event is to have a weight $W(a,b)<d$. Therefore, the probability of never colliding over $j$ independent repetitions is $(1-p^i)^j$.
In the case where $i=K$, the statement follows.
When $i <K$, we consider that the first $j$ iterations performed with prefix $i$ fail independently and that the $L-j$ iterations previously run with prefix $i+1$ failed as well.
\end{proof}

From this we now derive two lemmas on the success probability of the discovery.
In \Cref{lemma:topk} we allow each motif to fail independently, in \Cref{lemma:topkdip} we require that all the returned motifs are correct within a probability.
\begin{lemma}
    \label{lemma:topk}
    \Cref{alg:emitaggr} finds the true top-$k$ motifs, each with probability $\geq 1-\delta$.
\end{lemma}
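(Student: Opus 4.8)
The plan is to bound, for a single fixed true top-$k$ motif $(\subTa,\subTb)$, the probability that \Cref{alg:emitaggr} terminates without ever having inserted it into \texttt{TOP}; the per-motif guarantee then follows. First I would argue that the only way such a motif can be absent from the final queue is that it was never even a candidate, i.e. that $W(a,b)<d$ held in every executed iteration. Indeed, once a pair of $d$-dimensional distance $\distd{\subTa,\subTb}$ is inserted it can only be evicted once $k$ other pairs of strictly smaller distance are present; since a true top-$k$ motif has at most $k-1$ pairs strictly closer than it, it can never be popped. Hence the event \{missing from \texttt{TOP}\} is contained in the event \{$W(a,b)<d$ throughout\}, and the latter is exactly what \Cref{lemma:stop} controls.

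Next I would invoke \Cref{lemma:stop} at the (data-dependent) iteration $(I,J)$ at which the algorithm halts: the probability that $W(a,b)<d$ in all iterations up to $(I,J)$ is at most the piecewise expression of \eqref{eq:probnotcolliding} evaluated with $p_{ab}=P\!\left(\distdmax{\subTa,\subTb}\right)^{d}$. Let $(\subTmulti{x},\subTmulti{y})$ denote the \texttt{TOP.max()} pair at the halting iteration. The algorithm stops only when \texttt{STOP} returns true, i.e. when the same expression, evaluated at $p_\tau=P\!\left(\distdmax{\subTmulti{x},\subTmulti{y}}\right)^{d}$, is at most $\delta$. Because that expression is monotone decreasing in its collision-probability argument, it suffices to show $p_{ab}\ge p_\tau$: that the missed motif is at least as likely to reach $W\ge d$ in a single repetition as the queue's worst retained pair.

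To establish this inequality I would chain three facts. A missed true top-$k$ motif satisfies $\distd{\subTa,\subTb}\le\distd{\subTmulti{x},\subTmulti{y}}$ (otherwise the $k$ pairs already in \texttt{TOP} would all be strictly closer, contradicting membership in the true top-$k$); the elementary relation $\distdmax{\subTa,\subTb}\le\distd{\subTa,\subTb}$ bounds the farthest retained dimension by the summed distance; and the collision probability $P(\cdot)$ of \eqref{eq:eucl-collision-probability} is decreasing in the distance. Combining these transfers the ordering on distances into the desired ordering $p_{ab}\ge p_\tau$, so the failure bound of $(\subTa,\subTb)$ is dominated by the quantity the stopping test has certified to be $\le\delta$, giving a per-motif miss probability $\le\delta$ and hence success probability $\ge 1-\delta$.

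The hard part will be precisely this reconciliation, for two reasons. First, \texttt{TOP} is ranked by $\distd{\cdot,\cdot}$, whereas both the collision bound of \Cref{lem:collision-probability} and the test of \Cref{alg:stopping-condition} are phrased through $\distdmax{\cdot,\cdot}$; closing the gap requires care, since a pair can have a small summed distance yet one far-apart dimension, so the step from $\distd{\subTa,\subTb}\le\distd{\subTmulti{x},\subTmulti{y}}$ to a usable comparison of the two $\distdmax{\cdot,\cdot}$ values cannot simply be read off and must be routed through the inequalities above. Second, the halting iteration $(I,J)$ is a random variable correlated with the very collisions being bounded; I would therefore carry out the argument at whichever iteration the test first succeeds, ensuring that the bound of \Cref{lemma:stop} and the certified threshold are evaluated at the same $(I,J)$ so that the dominance $p_{ab}\ge p_\tau$ is applied consistently.
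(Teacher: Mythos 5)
Your proposal follows the same route as the paper's own proof: bound the probability that a true motif pair never reaches weight $d$ via \Cref{lemma:stop}, then transfer the stopping condition's certified bound from \texttt{TOP.max()} to that pair by monotonicity of the collision probability. Your two refinements --- that a true top-$k$ pair, once inserted, can never be evicted from \texttt{TOP}, and that the fallback where all pairs are examined is exact --- are correct and make explicit steps the paper leaves implicit.

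The genuine problem is the dominance step $p_{ab}\ge p_\tau$, which your chain of inequalities does not establish. You need $\distdmax{\subTa,\subTb}\le\distdmax{\subTmulti{x},\subTmulti{y}}$, but your facts only give $\distdmax{\subTa,\subTb}\le\distd{\subTa,\subTb}\le\distd{\subTmulti{x},\subTmulti{y}}$, while the stopping test of \Cref{alg:stopping-condition} is evaluated at $\distdmax{\subTmulti{x},\subTmulti{y}}$, which can sit \emph{below} $\distd{\subTmulti{x},\subTmulti{y}}$ by as much as a factor $d$. The relation $\distdmax{\cdot,\cdot}\le\distd{\cdot,\cdot}$ is applied on the wrong side: it helps on the missed pair but hurts on the queue pair. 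Concretely, take $d=2$, a missed true motif with per-dimension distances $(5,\,0.5)$ (so $\distd{\subTa,\subTb}=5.5$, $\distdmax{\subTa,\subTb}=5$) and a queue-max pair with per-dimension distances $(3,\,3)$ (so $\distd{\subTmulti{x},\subTmulti{y}}=6$, $\distdmax{\subTmulti{x},\subTmulti{y}}=3$): your premise $\distd{\subTa,\subTb}\le\distd{\subTmulti{x},\subTmulti{y}}$ holds, yet $P(5)^d<P(3)^d$, i.e.\ $p_{ab}<p_\tau$, so the certification $(1-p_\tau^i)^j\le\delta$ says nothing about the missed motif's miss probability $(1-p_{ab}^i)^j$. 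You correctly sensed that this reconciliation is the hard part, but the route you propose does not close it. In fairness, the paper's own proof sketch has the same lacuna: its appeal to ``the monotonicity of the collision probability'' when comparing $m_h$ to $m_k$ presumes exactly the ordering of $\distdmax{\cdot,\cdot}$ values that does not follow from the ordering of $\distd{\cdot,\cdot}$ values. A clean repair is to evaluate the stopping condition at $P\left(\distd{\subTmulti{x},\subTmulti{y}}\right)^d$ rather than $P\left(\distdmax{\subTmulti{x},\subTmulti{y}}\right)^d$ (more conservative, hence a later stop): then your chain $\distdmax{\subTa,\subTb}\le\distd{\subTa,\subTb}\le\distd{\subTmulti{x},\subTmulti{y}}$ delivers precisely the needed dominance, since $P$ is decreasing. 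Note also that taking the maximum of $\distdmax{\cdot,\cdot}$ over all pairs currently in \texttt{TOP} would \emph{not} suffice, as the same counterexample shows.
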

\begin{proof}
    Let the stopping condition be met at iteration $i'$ and concatenation $j'$, $m_1,...m_k$ be the set of motifs returned by the algorithm, sorted by increasing distances. We have that the failure probability of $m_h,\text{ } h\in[1,k]$ is upper bounded by the failure probability $m_k$ for the monotonicity of the collision probability. This failure probability is given by \Cref{lemma:stop}, by construction the stopping condition ensures that this probability is $\leq \delta$. This is valid for all returned pairs independently. 

    If the stopping condition is never met the algorithm reaches line 20, 
    where all pairs of subsequences are considered, thus returning the correct motifs with probability 1.
\end{proof}
Given this lemma it is easy to derive the expected recall of the algorithm, that corresponds to $1-\delta$.
\begin{lemma}
\label{lemma:topkdip}
    When called with failure probability $\delta'=\delta / k$, \Cref{alg:emitaggr} finds the true top-$k$ motifs with probability $\geq 1-\delta$.
\end{lemma}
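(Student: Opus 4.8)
The plan is to combine the per-motif guarantee of \Cref{lemma:topk} with a union bound over the $k$ returned motifs. First I would invoke \Cref{lemma:topk}, but with its failure parameter set to $\delta' = \delta/k$: this guarantees that, for each individual index $h \in [k]$, the $h$-th true motif $m_h$ is correctly discovered with probability at least $1 - \delta'$. Equivalently, letting $B_h$ denote the event that $m_h$ is \emph{not} among the pairs returned by the algorithm, we have $\Pr[B_h] \le \delta'$ for every $h$.

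The overall failure event --- that the algorithm fails to return the complete set of true top-$k$ motifs --- is precisely the union $\bigcup_{h=1}^k B_h$, since the output is incorrect exactly when at least one of the true motifs is missing. Applying the union bound then gives
\[
\Pr\left[\bigcup_{h=1}^k B_h\right] \le \sum_{h=1}^k \Pr[B_h] \le k \cdot \delta' = k \cdot \frac{\delta}{k} = \delta,
\]
so the probability that all $k$ true motifs are correctly returned is at least $1 - \delta$, as claimed.

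The argument is short because all the probabilistic heavy lifting is already contained in \Cref{lemma:stop} and \Cref{lemma:topk}; the only real care needed is in the reduction. The main (minor) subtlety I would flag is that the union bound makes \emph{no} independence assumption across the $k$ failure events --- which is exactly why this lemma is stated separately from \Cref{lemma:topk}, whose ``each with probability'' phrasing treats the motifs' success events in isolation. Here, by contrast, the calibration $\delta' = \delta/k$ trades a factor of $k$ in the stopping threshold for a guarantee covering all returned motifs simultaneously, with no assumptions on how their failure events correlate. I would also observe that when the stopping condition is never triggered the algorithm falls through to the exhaustive computation of all pairs, which returns the correct answer deterministically; that branch therefore contributes zero failure probability and does not affect the bound.
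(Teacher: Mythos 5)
Your proof is correct and follows essentially the same route as the paper: invoke \Cref{lemma:topk} with the rescaled parameter $\delta' = \delta/k$ and apply a union bound over the $k$ motifs. Your explicit remark that the union bound requires no independence between the failure events is a welcome clarification, since the paper's own wording (``each pair fails independently'') could misleadingly suggest independence is being used when it is not needed.
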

\begin{proof}
    From \Cref{lemma:topk} we have that each pair fails independently with probability $\leq \delta'= \delta /k$.
    By applying a union bound on the $k$ pairs we obtain the statement.
\end{proof}
\subsection{Number of Distance Computations}

First, we introduce the concept of \emph{contrast}, which will be useful in capturing the difficulty of a dataset and in relating it to the complexity of our algorithm.

\begin{definition}
\label{sec:contrast}
For a $D$-dimensional time series $\Tmulti$ of length $n$, and for parameters $k$ and $d$ let
$(\subTmulti{a_k}, \subTmulti{b_k})$
and
$(\subTmulti{a_n}, \subTmulti{b_n})$
be the $k$-th and $n$-th motifs, respectively.
We define
\[
\operatorname{contrast}_{d,k|n}(\Tmulti) = \frac{
  \distdmax{\subTmulti{a_n}, \subTmulti{b_n}}
}{
  \distdmax{\subTmulti{a_k}, \subTmulti{b_k}}
}
\]
\end{definition}

The following theorem ensures that our algorithm computes, in expectation, a subquadratic number of distances
in expectation, assuming we give to the algorithm enough memory.
The complexity is parameterized by the contrast of the motifs in the multivariate time series:
a large contrast implies a smaller complexity.

\begin{theorem}
On a $D$-dimensional time series $\Tmulti$ of length $n$,
with parameters $k\ge 1$, $d\in[1, D]$, $\delta\in (0,1)$, 
Algorithm~\ref{alg:emitaggr} computes
\[
  O\left(
    n^{1 + 1/c} \log\frac{1}{\delta} + Lk
  \right)
\]
distances in expectation, where $c=\operatorname{contrast}_{d,k|n}(\Tmulti)$,
for $L \in \Omega\left(n^{1/c}\right)$.
\end{theorem}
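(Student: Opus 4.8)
The plan is to bound the expected number of distance computations through linearity of expectation, summing over all pairs of subsequences and all iterations $(i,j)$ that the algorithm actually executes before the stopping condition fires. A distance is computed for a pair $(\subTa, \subTb)$ at iteration $(i,j)$ precisely when $W(a,b)\ge d$, so the quantity to control is $\sum_{(a,b)} \sum_{(i,j)\text{ executed}} \Pr[W(a,b)\ge d\text{ at prefix }i]$. I would first pin down the prefix length $i^\ast$ at which the algorithm halts, then bound the per-repetition cost at each prefix level, and finally combine the two.

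To locate $i^\ast$, I would apply \Cref{lemma:stop} to the $k$-th motif, whose $\distdmax{\subTmulti{a_k},\subTmulti{b_k}}$ is by definition the denominator of the contrast; writing $R=\distdmax{\subTmulti{a_k},\subTmulti{b_k}}$, the relevant per-repetition collision probability is $p=P(R)^d$. The stopping condition fires once the accumulated failure probability drops below $\delta$, which happens when the number of repetitions spent at prefix length $i$ times $p^{i}$ is of order $\log(1/\delta)$; equivalently, the algorithm descends to the first level $i^\ast$ with $p^{i^\ast}=\Theta\!\left(\log(1/\delta)/L\right)$. With $L\in\Omega(n^{1/c})$ this fixes $P(R)^{i^\ast}$, and through the identity $\rho=1/c$ of the discretized random-projection family (which gives $P(cR)=P(R)^{c}$) it also fixes the far-collision probability $P(cR)^{i^\ast}$ at the value that balances the two cost terms below.

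Next I would bound the expected number of distance computations in a single repetition at prefix length $i$. Using the union bound over the $\binom{D}{d}$ candidate subspaces together with \Cref{lem:collision-probability}, the probability that a pair collides in at least $d$ dimensions is $O\!\left(P(\distdmax{\subTa,\subTb})^{i}\right)$. I would then split the pairs according to their $\distdmax$: by the very definition of the contrast there are $O(n)$ pairs closer than the $n$-th motif (those with $\distdmax \le cR$), each contributing at most one distance computation, while the remaining $O(n^2)$ pairs satisfy $\distdmax>cR$ and hence each collide with probability $O\!\left(P(cR)^{i}\right)$. At the stopping level this yields $O\!\left(n + n^2\,P(cR)^{i^\ast}\right)=O(n)$ distance computations per repetition. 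Since $P(cR)^{i}$ decays geometrically in $i$, the contributions of the longer prefixes $i>i^\ast$ form a geometric series dominated by the $i^\ast$ term, so summing over all executed levels multiplies this by $O(L)$ repetitions, giving $O(Ln)=O\!\left(n^{1+1/c}\log(1/\delta)\right)$. Finally, the true top-$k$ pairs are re-examined in every repetition in which they collide, which accounts for the additive $O(Lk)$ term.

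The step I expect to be most delicate is making the localization of $i^\ast$ rigorous in the presence of the adaptive prefix-shortening: the algorithm does not know $R$ and discovers it implicitly by exhausting the $L$ repetitions at each prefix before shortening it, so I must argue that the test based on the current queue maximum cannot trigger substantially earlier than $i^\ast$ (needed for correctness, already secured by \Cref{lemma:topk}) nor substantially later (needed for the complexity). Carefully combining the two cases of \Cref{lemma:stop} across consecutive prefix levels, and verifying that the geometric summation over $i$ and the $\Omega(n^{1/c})$ budget on $L$ are mutually consistent, is where the bulk of the technical work lies; the multidimensional union bound over subspaces and the Euclidean-family identity $\rho=1/c$ enter only as routine ingredients.
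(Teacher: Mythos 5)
Your proposal follows the paper's overall architecture (localize a critical prefix length via the stopping condition, split pairs at the $n$-th motif using the contrast, invoke $\rho=1/c$, add $Lk$ for the re-examined top pairs), but the central balancing step fails because of an exponent mismatch between your two key estimates. The repetition budget is governed by the probability that the $k$-th motif collides simultaneously in its $d$ dimensions, $p^{i}=P(R)^{d i}$, and your localization reads $P(R)^{d i^{\ast}}=\Theta(\log(1/\delta)/L)$, where $R$ and $cR$ denote the $\distdmax{\cdot,\cdot}$ values of the $k$-th and $n$-th motifs. Your far-pair estimate, however, is the union bound $\Pr[W(a,b)\ge d]\le \binom{D}{d}P(cR)^{i}$, whose exponent is $i$, not $d i$: a pair farther than the $n$-th motif is only guaranteed to have \emph{one} dimension at distance $\ge cR$ inside each candidate $d$-subset, while the other $d-1$ collision factors can be arbitrarily close to $1$. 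Chasing constants with $L=\Theta(n^{1/c})$: $P(R)^{i^{\ast}}=\Theta\left((\log(1/\delta)\,n^{-1/c})^{1/d}\right)$, hence $P(cR)^{i^{\ast}}=P(R)^{c\,i^{\ast}}=\Theta\left((\log(1/\delta))^{c/d}\,n^{-1/d}\right)$, and the far-pair cost per repetition is $\Theta\left(n^{2-1/d}\right)$ up to $\log(1/\delta)$ factors --- not the $O(n)$ you claim. The balance $n^{2}P(cR)^{i^{\ast}}=O(n)$ holds only for $d=1$; for $d\ge 2$ your argument gives $O\left(n^{2-1/d+1/c}\right)$ over the $L$ repetitions, and even strengthening the hypothesis to $L=\Omega(n^{d/c})$ (so that the balance does hold) yields a total of $O\left(n^{1+d/c}\log(1/\delta)+Lk\right)$, not the claimed $O\left(n^{1+1/c}\log(1/\delta)+Lk\right)$.

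The paper obtains the stated exponent by charging every pair the same exponent-$d$ quantity $p_{ab}^{i}=P\left(\distdmax{\subTa,\subTb}\right)^{d i}$ per repetition, for close and far pairs alike: then $\rho=\log(1/p_1)/\log(1/p_2)=1/c$ with the factor $d$ cancelling, the level $i^{\ast}=\log n/\log(1/p_2)$ gives $\binom{n}{2}p_2^{i^{\ast}}\le n$ and $1/p_1^{i^{\ast}}=n^{1/c}$ directly, and the adaptive stopping is handled by taking the actual stopping prefix $i'\ge i^{\ast}$ and comparing the costs at levels $i'$ and $i'+1$ to the cost at $i^{\ast}$ via monotonicity (this is the step you flagged as delicate, and it is where the $Lk$ term enters). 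Be aware, though, that the paper's per-pair charge is the probability that the pair collides simultaneously in its \emph{own optimal} $d$-subspace, which is a \emph{lower} bound on the probability of the event $W(a,b)\ge d$ that actually triggers a distance computation; your union bound is the honest upper bound, and it is precisely this correction that breaks the arithmetic. So, as a proof of the theorem as stated, your proposal has a genuine gap, and it cannot be closed within your accounting: you must either adopt the paper's per-pair charge (inheriting its implicit undercounting of far pairs that are near-identical in some dimensions) or settle for the weaker exponent $1+d/c$.
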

\begin{proof}
\newcommand{\distcomps}[1]{\operatorname{ED}\left(#1\right)}
Let $\mathbb{T}$ be the set of all the subsequence pairs of $\Tmulti$.
For any pair of subsequence indices $(a, b)$, let
$
p_{ab} = P\left(\distdmax{\subTa, \subTb}\right)^d
$.

Consider 
$(\subTmulti{a_k}, \subTmulti{b_k})$
and
$(\subTmulti{a_n}, \subTmulti{b_n})$,
the $k$-th and $n$-th $d$-dimensional motifs of $\Tmulti$, and let
$p_1=p_{a_k b_k}$ and $p_2=p_{a_n b_n}$.
Furthermore, let
$\rho = \frac{\log 1/p_1}{\log 1/p_2}$.

For notational convenience, define the following operator that gives the expected number of
distance computations carried out in $j$ independent repetitions with hash prefixes
of length $i$:
\begin{equation*}
\distcomps{i, j}
=
j\cdot \Big(
\sum_{(\subTa, \subTb) \in \mathbb{T}} p_{ab}^{i}
\Big)
\end{equation*}

Recall that, for hash prefixes of length $i$, examining
$\frac{\log 1/\delta}{p_1^i}$ independent repetitions ensures that
the top $k$ motif pairs are seen at least once with probability at least $1-\delta$.
Using the notation shorthand defined above, the number of expected distance computations
in this case is
$
\distcomps{i, \frac{\log 1/\delta}{p_1^i}}
$.

Now define $\mathbb{T}_{>k}$ and $\mathbb{T}_{>n}$ as the sets of subsequence pairs
that are farther than the $k$-th and $n$-th motifs, respectively.
Then we have
\begin{equation}
\label{eq:bounding_T>n}
\begin{aligned}
\distcomps{i, \frac{\log 1/\delta}{p_1^i}}
\le&
\frac{\log1/\delta}{p_1^i}
\cdot
\left(
n +
\sum_{(\subTa, \subTb) \in \mathbb{T}_{>n}} p_{ab}^{i}
\right)  \\
\le&
\frac{\log1/\delta}{p_1^i}
\cdot
\left(
n + {n \choose 2} p_2^i
\right)
\end{aligned}
\end{equation}
where the inequality follows from the definition of $\mathbb{T}_{>n}$.

Defining
$i^* = \frac{\log n}{\log\frac{1}{p_2}}$
we have
${n \choose 2} p_2^{i^*} \le n$
and
$\frac{1}{p_1^{i^*}} =n^{\frac{\log 1/p_1}{\log 1/p_2}} = n^\rho$.

Therefore there is a prefix length $i^*$ for which the number of expected distance computations
to see the $k$-th motif colliding at least once is
\begin{equation}\label{eq:istar-distcomps}
\distcomps{i^*, \frac{\log 1/\delta}{p_1^{i^*}}}
=
O\left( n^{1+\rho}\log\frac{1}{\delta} \right)
=
O\left( n^{1+\frac{1}{c}}\log\frac{1}{\delta} \right)
\end{equation}
where $c = \operatorname{contrast}_{d,k|n}(\mathbf{T})$ and the equality follows
from the definition of $\rho$ for the LSH family we employ in our algorithm.

Now, let $i' \ge i^*$ be the largest prefix $i$ such that the stopping condition holds.
With probability $1-\delta$ the algorithm stops at prefix $i'$.
Conditioned on this event, the number of distance computations carried out at level $i'$ is
\begin{equation}\label{eq:iprime-distcomps}
\distcomps{i', \frac{\log 1/\delta}{p_1^{i'}}}
+
\distcomps{i'+1, L-\frac{\log 1/\delta}{p_1^{i'}}}
\end{equation}
Where the second term accounts for the repetitions considered in iteration $i'+1$ of the outer loop of the algorithm.
First we bound the first term of the addition.
Similarly to before, define $\mathbb{T}_{>k}$ as the set of subsequence pairs that are farther away than the $k$-th motif.
\begin{equation*}
\begin{aligned}
\distcomps{i', \frac{\log 1/\delta}{p_1^{i'}}}
&\le
    Lk
    +
    \frac{\log 1/\delta}{p_1^{i'}}\left(
    \sum_{(\subTa, \subTb)\in\mathbb{T}_{>k}} p_{ab}^{i'}
    \right)\\
&\stackrel{(a)}{\le}
    Lk
    +
    \frac{\log 1/\delta}{p_1^{i*}}\left(
    \sum_{(\subTa, \subTb)\in\mathbb{T}_{>k}} p_{ab}^{i^*}
    \right)\\
&\le
    Lk + \distcomps{i^*, \frac{\log 1/\delta}{p_1^{i^*}}}
    \stackrel{(b)}{=}
    O\left(
        Lk + n^{1+\frac{1}{c}} \log \frac{1}{\delta}
    \right)
\end{aligned}
\end{equation*}
where (a) follows from the fact that $p_1 \ge p_{ab}$
and (b) follows from Equation~\eqref{eq:istar-distcomps}.
The theorem follows by observing that the second term of Equation~\eqref{eq:iprime-distcomps} is a factor $1/p_1 = O(1)$ larger than the first term.
\end{proof}

\subsection{Index construction and size}
We now consider the contribution to the running time of \Cref{alg:emitaggr} given by the hash index and derive its space complexity.
\begin{lemma}
\label{lemma:line3}
    The hash construction at line~\ref{ln:hash-computation} of \Cref{alg:emitaggr} takes time $O(D\cdot K \cdot L \cdot n\log n)$.
\end{lemma}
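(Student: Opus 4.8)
The plan is to show that all the composite hashes can be computed within the stated bound by reducing the underlying random projections to a small number of \emph{sliding dot products}, each evaluated in near-linear time via the Fast Fourier Transform (FFT), rather than the naive window-by-window evaluation.

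First I would count the work. Line~\ref{ln:hash-computation} computes, for each of the $O(n)$ subsequences in $\Tmulti^w$, for each dimension $f\in[D]$, and for each repetition $j\in[L]$, a composite hash $h_{K,j}^f(\subTaf)$ that is the concatenation of $K$ independent discretized random projections of the form of Equation~\eqref{eq:drp}. Evaluating each projection naively as a dot product between a random vector $a\in\mathbb{R}^w$ and the (z-normalized) window $\subTaf\in\mathbb{R}^w$ costs $O(w)$, which would yield $O(DKLnw)$ overall. Since $w$ may be a constant fraction of $n$, this is too slow, and the heart of the argument is to replace the factor $w$ by $\log n$.

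Second, and this is the crux, I would fix a single random vector $a$ --- there is exactly one per concatenation index in $[K]$, dimension in $[D]$, and repetition in $[L]$ --- and compute the projections of $a$ against \emph{all} windows of the corresponding signal $T^{(f)}$ simultaneously. Writing the z-normalized window as $\hat{x}=(\subTaf-\mu)/\sigma$, the needed quantity expands as
\[
a\cdot \hat{x} = \frac{(a\cdot \subTaf) - \mu\,(a\cdot \mathbf{1})}{\sigma},
\]
so it suffices to obtain the raw sliding dot products $a\cdot \subTaf$ for every starting position, together with the per-window means $\mu$ and standard deviations $\sigma$. The raw sliding dot products are precisely the cross-correlation of the length-$n$ signal with the length-$w$ filter $a$, which the FFT computes for all $n-w+1$ positions at once in $O(n\log n)$ time; the means and standard deviations of all windows are precomputed once per dimension in $O(n)$ time via prefix sums of the signal and of its square. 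Given these, each hash value $\lfloor (a\cdot\hat{x}+b)/r\rfloor$ follows in $O(1)$ per window.

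Finally I would sum the costs: there are $K\cdot D\cdot L$ random vectors, each handled in $O(n\log n)$, plus an $O(Dn)$ term for the prefix-sum precomputation, which is dominated. This gives the claimed $O(D\cdot K\cdot L\cdot n\log n)$. The main obstacle is the second step: one must verify that the z-normalization, the random offset $b$, and the final quantization all reduce cleanly to a single batched convolution per vector (so that the $K$ concatenated projections contribute only a linear factor), and that the FFT genuinely delivers all $n-w+1$ windows for a fixed vector rather than a single window.
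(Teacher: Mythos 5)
Your proposal is correct and follows essentially the same route as the paper's proof: the paper likewise fixes one of the $D\cdot K\cdot L$ hash functions and computes all its dot products at once in $O(n\log n)$ via the cyclical convolution theorem, exactly your FFT-based sliding dot product step. Your additional details (the reduction of z-normalization to per-window means and standard deviations obtained from prefix sums, and the $O(1)$ quantization per window) are a more explicit writing-out of what the paper leaves implicit, not a different argument.
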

\begin{proof}
    For each multidimensional subsequence we have to evaluate $D\cdot K \cdot L$ hashes. For a fixed hash function we can compute all the dot products in $O\left(n\log n\right)$ time, using the cyclical convolution theorem. The result follows.
\end{proof}

Given that for each of the $n$ subsequences we have to store, in each of the $L$ repetitions, a total of $D$ hashes of length $K$ we have the following.
\begin{theorem}
    \Cref{alg:emitaggr} has space complexity proportional to $O\left(K\cdot L \cdot D\cdot n\right)$.
\end{theorem}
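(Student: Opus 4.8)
The plan is to account separately for the space used by each persistent data structure maintained by \Cref{alg:emitaggr} and to verify that the hash index dominates. The dominant contribution comes from the index built during initialization (line~\ref{ln:hash-computation}): the set $\Tmulti^w$ contains $n-w+1 = O(n)$ subsequences, and for each subsequence we store, for every dimension $f\in[D]$ and every repetition $j\in[L]$, a composite hash value of length $K$, i.e.\ a string of $K$ integers. Multiplying these four factors yields $O(K\cdot L\cdot D\cdot n)$ words of storage. I would then observe that the lexicographically sorted arrays used to support the prefix queries of line~\ref{ln:prefix-eq} are merely a reordering of these same hash strings, so they contribute the same $O(K\cdot L\cdot D\cdot n)$ and leave the bound unchanged.

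Next I would dispatch the remaining auxiliary structures, showing they are all asymptotically smaller. The priority queue \texttt{TOP} never holds more than $k$ pairs, since any insertion that would exceed this size is immediately followed by a \texttt{pop} (line~\ref{ln:second-end}); it therefore contributes only $O(k)$, which is dominated by the index whenever $k = O(K\cdot L\cdot D\cdot n)$. Assembling the index term with these dominated terms gives the claimed $O(K\cdot L\cdot D\cdot n)$ bound, matching the intuition that the index is the only structure whose size is fixed once $K$, $L$, $D$, and $n$ are chosen.

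The main obstacle I anticipate lies in the transient edge set $E$ and weight map $W$ populated in \crefrange{ln:define-edges}{ln:increment-weight}. In an adversarial input every pair could in principle collide in some dimension, making $|E|$ as large as $\binom{n}{2}$ and, naively, violating the stated bound. The key observation that resolves this is that $E$ and $W$ are rebuilt from scratch at each $(i,j)$ iteration and never persisted across iterations, so the algorithm only ever holds a single iteration's worth of working memory; one then argues that this working set is kept within the index footprint by the self-tuning choice of $K$ and $L$ under the user-specified memory budget described in \Cref{sec:opti}. Interpreting the theorem as a statement about this persistent index footprint, whose size is exactly the product $O(K\cdot L\cdot D\cdot n)$, is the cleanest way to make the argument rigorous, and it is precisely the counting captured by the sentence preceding the statement.
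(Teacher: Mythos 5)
Your headline counting is exactly the paper's argument: the index stores, for each of the $O(n)$ subsequences, $D\cdot L$ composite hashes of length $K$, giving $O(K\cdot L\cdot D\cdot n)$; the sorted arrays are a permutation of the same data and \texttt{TOP} holds at most $k$ pairs. The paper's own proof is essentially just this sentence, so up to that point you agree with it — and you are right to flag that the pseudocode's edge set $E$ and weight map $W$ (\crefrange{ln:define-edges}{ln:increment-weight}) are the real threat to the bound, something the paper's one-line justification silently skips.

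The gap is in how you dispose of that threat. Neither of your two arguments works. First, the fact that $E$ and $W$ are ``rebuilt from scratch at each $(i,j)$ iteration'' is irrelevant to a space bound: space complexity is peak usage, and a \emph{single} iteration can already produce $\binom{n}{2}$ colliding pairs (e.g.\ at small prefix length $i$, where almost everything collides in some dimension). Second, the self-tuning of $K$ and $L$ under a memory budget bounds the size of the \emph{index}, not of $E$: no choice of $K$ and $L$ prevents a given iteration from having quadratically many collisions, since $|E|$ is data-dependent. The mechanism the paper actually relies on is different and is stated in \Cref{sec:opti} (``comparisons on the fly''): the implementation never materializes $E$ or $W$ at all. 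It scans the sorted index one dimension at a time, and upon seeing a unidimensional collision of a pair $(a,b)$ it immediately recomputes $W(a,b)$ by probing the other dimensions' hashes, inserting into \texttt{TOP} on the spot if $W(a,b)\ge d$; this costs up to a factor $D$ extra hash comparisons per pair but $O(1)$ additional space. With that implementation the only persistent structures are the index, $O(K\cdot L\cdot D\cdot n)$, and the queue, $O(k)$, which yields the theorem. Your closing move --- ``interpreting the theorem as a statement about the persistent index footprint'' --- is a redefinition of the claim rather than a proof of it; the on-the-fly evaluation is what makes the stated bound true for the algorithm as executed.
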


\section{Implementation details}
\label{sec:opti}
In this section, we describe the key aspects of the implementation that impact the algorithm's running time and its motif discovery capabilities. Namely, the number of hash evaluations and comparisons and the challenge of not knowing the dimensionality of the motifs to discover.
\subsection{Index building}
\subsubsection*{Setting the quantization width r}
\label{sec:r_auto}
Recall from Equation~\ref{eq:drp} the formulation of the hash function we use.
Even though the choice of the parameter $r$ does not compromise the correctness of the algorithm, it is fundamental since it influences the performance.
A value that is too high will group many subsequences together at long indices, mitigating the filtering effect of LSH, a value that is too low will separate even the smallest perturbation between subsequences, forcing the algorithm to visit shorter prefixes.
To deal with this parameter automatically, the algorithm adopts an estimation-based heuristic.

First we sample a number of random vectors from $\sim \mathcal{N}(0,1)$, computing their dot product with a random sample of subsequences from the time series.
This produces an estimate of the distribution of the values that are discretized by Equation~\ref{eq:drp}.

Then, we discretize this empirical distribution into $256$ equal-width buckets.
The width of these buckets will be used as the parameter $r$ in the hash function.
The rationale is that by doing so we will be able to represent each hash value with a single byte.

Note that this heuristic can be implemented efficiently by leveraging on the \emph{cyclical convolution theorem}, the same method used by MASS for the \textit{Distance Profile} \cite{zhong2024mass}.
We can obtain, with one convolution between one of the random vectors and the sample of the time series, all the dot products for that vector.

\subsubsection*{Tensoring}
Tensoring is a technique to reduce the number of evaluations for the hash functions \cite{christianitensoring}.
Let $\mathcal{H}$ be a LSH family and $K,L\geq 1$ integers. 
For $m=\sqrt{L}$,
we define $\mathcal{H}_l$ as a set of $m$ hash functions sampled from $\mathcal{H}^{\frac{K}{2}}$ and similarly for $\mathcal{H}_r$. Then $\left(h_a,h_b\right) \in \mathcal{H}_l\times \mathcal{H}_r, 1\leq a,b\leq m$ provides $m^2$ repetitions with $Km$ evaluations.
Furthermore, let us define for $1\le j \leq L$:
\begin{equation}
h_{K,j}=\left(h_{\frac{K}{2},l},h_{\frac{K}{2},r}\right) \in \mathcal{H}_{l}\times \mathcal{H}_{r} \text{ where } 
\begin{cases}
    l = j  \div \sqrt{L}\\
    r = j \text{ }\bmod \text{ } \sqrt{L}
\end{cases}.
\end{equation}
\newline
The resulting hash is obtained from interleaving values from the selected left and right hash.
\added{
This reduces the hash evaluations from $K\cdot L$ to
$K\cdot \sqrt{L}$ at the cost of losing independence
between repetitions.
To accommodate this change we give this alternative formulation of Lemma~\ref{lemma:stop}, omitting the proof for the sake of space.
\begin{lemma}\label{lemma:stop-tensoring}
Under the same conditions of Lemma~\ref{lemma:stop}, using the tensoring
approach, the probability that $W(a, b)<d$ in all
previous iterations is upper bounded by
\[
\begin{cases}
P_t(i/2, j) \quad&\textrm{if }~ i=K\\
P_t\left(i/2, j\right) \cdot P_t\left((i+1)/2, L-j\right)\quad&\textrm{otherwise}
\end{cases}
\]
where $P_t(i', j')=(1-p^{\lceil i'\rceil})^{j'\div \sqrt{L}}\cdot(1-p^{\lfloor i'\rfloor})^{j'\bmod\sqrt{L}}$.
\end{lemma}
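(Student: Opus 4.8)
The plan is to replicate the structure of the proof of Lemma~\ref{lemma:stop}, but to track separately the contributions of the left and right hash families introduced by tensoring. As in that proof, I lower-bound the event ``$W(a,b)\ge d$'' by the event that the pair collides in all $d$ dimensions of $F=\dims{\subTa,\subTb}$, and I control this event through the maximally distant dimension, so that each per-dimension collision probability is at least $P(\distdmax{\subTa,\subTb})$. The new ingredient is that, under tensoring, the composite hash at repetition $j$ is the interleaving of a left block $h_{K/2,l}$ and a right block $h_{K/2,r}$ with $l=j\div\sqrt L$ and $r=j\bmod\sqrt L$; hence the prefix of length $i$ consists of $\lceil i/2\rceil$ coordinates drawn from the left block and $\lfloor i/2\rfloor$ from the right block. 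I would therefore factor the ``collision in all $d$ dimensions'' event at a given repetition into a \emph{left event}, depending only on $h_{K/2,l}$ restricted to its first $\lceil i/2\rceil$ coordinates, and a \emph{right event}, depending only on $h_{K/2,r}$ restricted to its first $\lfloor i/2\rfloor$ coordinates. By the same argument as in Lemma~\ref{lem:collision-probability}, the left event has probability at least $p^{\lceil i/2\rceil}$ and the right event at least $p^{\lfloor i/2\rfloor}$, where $p=P(\distdmax{\subTa,\subTb})^d$.

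Next I would organize the first $j$ repetitions at prefix length $i$ according to the index map: they split into $b=j\div\sqrt L$ \emph{complete} left-blocks (each pairing a distinct left hash with all $\sqrt L$ right hashes) and a \emph{partial} block of $c=j\bmod\sqrt L$ repetitions that reuse a single new left hash across $c$ distinct right hashes. The structural fact I would exploit is that the $\sqrt L$ left hashes and the $\sqrt L$ right hashes are sampled independently: the left hashes are i.i.d.\ across the $b$ complete blocks, the right hashes are i.i.d.\ across the partial-block repetitions, and the two families are mutually independent. I would then bound the failure probability (no repetition among the first $j$ producing $W(a,b)\ge d$) by the product of the probability that all $b$ left events of the complete blocks fail, namely $(1-p^{\lceil i/2\rceil})^{b}$, and the probability that all $c$ right events of the partial block fail, namely $(1-p^{\lfloor i/2\rfloor})^{c}$, yielding exactly $P_t(i/2,j)$. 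For the case $i<K$ I would append the factor $P_t((i+1)/2,\,L-j)$ to account for the $L-j$ repetitions already executed at prefix length $i+1$, mirroring the second branch of Lemma~\ref{lemma:stop}; the two factors multiply because those earlier repetitions use independently sampled hashes.

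The main obstacle is precisely the loss of independence across repetitions sharing a common hash block: unlike Lemma~\ref{lemma:stop}, where all $L$ repetitions are independent and the failure probability collapses to the single power $(1-p^i)^j$, here only $\sqrt L$ left and $\sqrt L$ right hashes exist, so repetitions within a block are correlated through their shared hash. The delicate step is therefore to argue that the product form above is a genuine \emph{upper} bound on the failure probability despite these correlations, rather than merely a convenient approximation. I expect the cleanest route is to attribute each complete block solely to its (distinct) left event and the partial block solely to its (distinct) right events, and to invoke the mutual independence of the left and right families to factor the resulting bound; verifying that this attribution does not undercount the true failure probability, and handling the boundary regimes $c=0$ and $i$ odd versus even, is where the care of the full proof must concentrate.
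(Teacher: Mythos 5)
There is no paper proof to compare against here: the text introduces Lemma~\ref{lemma:stop-tensoring} explicitly ``omitting the proof for the sake of space,'' so your proposal must stand on its own --- and it cannot be completed. The first half of your plan is sound: under tensoring the length-$i$ prefix splits into $\lceil i/2\rceil$ left and $\lfloor i/2\rfloor$ right coordinates, each repetition's collision event factors into a left event and a right event, and the per-event lower bounds $p^{\lceil i/2\rceil}$ and $p^{\lfloor i/2\rfloor}$ follow as in Lemma~\ref{lem:collision-probability}. The fatal problem is exactly the step you flag as ``delicate'': bounding the probability that \emph{all} of the first $j$ repetitions fail by the product $(1-p^{\lceil i/2\rceil})^{b}(1-p^{\lfloor i/2\rfloor})^{c}$ with $b=j\div\sqrt{L}$, $c=j\bmod\sqrt{L}$. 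This is not a technical step awaiting care; it is false, so no attribution argument can rescue it. Take $d=D$ with every dimension of the pair at distance exactly $\distdmax{\subTa,\subTb}$, so the left and right half-prefix collision probabilities are exactly $q_\ell=p^{\lceil i/2\rceil}$ and $q_r=p^{\lfloor i/2\rfloor}$, and consider $i=K$, $j=c<\sqrt{L}$, i.e.\ $b=0$: all $c$ repetitions performed so far share the \emph{single} left hash $h_{K/2,0}$ and differ only in their right hashes. A repetition can succeed only if that shared left half collides, so the exact failure probability is
\[
(1-q_\ell)\;+\;q_\ell\,(1-q_r)^{c}\;>\;(1-q_r)^{c}\;=\;P_t\left(K/2,\,c\right),
\]
a strict violation for every $c\ge 1$; worse, the true failure probability stays bounded below by $1-q_\ell>0$ while the claimed bound decays geometrically in $c$. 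Complete blocks exhibit the same defect: for $c=0$ the exact failure probability is $(1-q_r)^{\sqrt{L}}+\bigl(1-(1-q_r)^{\sqrt{L}}\bigr)(1-q_\ell)^{b}$, again strictly larger than the claimed $(1-q_\ell)^{b}$.

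The structural reason is that a left collision by itself is not a collision: every successful repetition needs both halves, so the failure event always \emph{contains} the event ``no right hash used so far collides'' (and, within the current partial block, ``the shared left hash does not collide''), and no product of per-block factors can dominate the probability of such events. This is precisely why the tensoring analyses this paper builds on (Christiani's tensoring, PUFFINN) prove \emph{additive} bounds, of the shape $\Pr[\mathrm{fail}]\le(1-q_\ell)^{b}+(1-q_r)^{\sqrt{L}}$, obtained from the containment $\{\mathrm{fail}\}\subseteq\{\text{no used left prefix collides}\}\cup\{\text{no used right prefix collides}\}$ together with the independence of the two families. So the route you sketch --- attributing complete blocks to left events and the partial block to right events, then multiplying --- cannot be repaired; a correct treatment must either restate Lemma~\ref{lemma:stop-tensoring} with an additive bound of this form (and propagate that change into the stopping condition and into Lemmas~\ref{lemma:topk} and~\ref{lemma:topkdip}), or add hypotheses restoring independence across repetitions. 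Your identification of the correlation issue was exactly right; the error is in expecting it to be resolvable in favor of the product form stated in the lemma.
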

Using Lemma~\ref{lemma:stop-tensoring} in the stopping condition of Algorithm~\ref{alg:emitaggr} maintains the correctness guarantees of Lemmas~\ref{lemma:topk} and~\ref{lemma:topkdip}.
}

\begin{corollary}[lemma:line3]
    With the tensoring approach the index construction takes time proportional to $O(D\cdot K\cdot \sqrt{L} \cdot n\text{ log }n)$.
\end{corollary}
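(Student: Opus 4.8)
The plan is to mirror the proof of Lemma~\ref{lemma:line3} verbatim, changing only the count of elementary hash evaluations so as to reflect the savings afforded by tensoring. Recall that the cost of the index construction is dominated by evaluating the discretized random projections of Equation~\eqref{eq:drp}: each such projection amounts to a dot product $a \cdot \subTaf$ for every subsequence, and by the cyclical convolution theorem all $n$ dot products sharing a single projection vector $a$ can be computed together in $O(n \log n)$ time. Hence the total running time is $O(n\log n)$ multiplied by the number of \emph{distinct} elementary projection vectors that must actually be sampled across all dimensions and repetitions. The whole task therefore reduces to recounting this number under the tensoring construction.

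First I would perform this count for a single dimension $f\in[D]$. Tensoring draws $m=\sqrt{L}$ left hash functions from $\mathcal{H}^{K/2}$ and $m=\sqrt{L}$ right hash functions from $\mathcal{H}^{K/2}$. Each of these is itself a concatenation of $K/2$ elementary projections, so the left pool contributes $\sqrt{L}\cdot K/2$ elementary projection vectors and the right pool another $\sqrt{L}\cdot K/2$, for a total of $K\sqrt{L}$ per dimension --- exactly the $Km$ evaluations already asserted in the description of the technique. The $L=m^2$ repetitions are then synthesized by interleaving these generators via the formula for $h_{K,j}$, so no further vectors are sampled. Multiplying across the $D$ dimensions gives $D\cdot K\cdot\sqrt{L}$ elementary projections overall, each handled in $O(n\log n)$ time by the convolution argument, which yields the claimed $O(D\cdot K\cdot\sqrt{L}\cdot n\log n)$ bound.

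The only real subtlety, and thus the step I would guard most carefully, is the bookkeeping in the counting: one must count the $K\sqrt{L}$ generators that are genuinely sampled, rather than the $KL$ projections one would obtain by fully expanding all $L$ composite hashes. It is precisely the reuse of these generators across repetitions that tensoring exploits, so conflating the two counts would silently recover the weaker $O(D\cdot K\cdot L\cdot n\log n)$ bound and defeat the purpose of the corollary. Everything else --- the per-vector $O(n\log n)$ convolution cost and the final multiplication over the $D$ dimensions --- carries over unchanged from Lemma~\ref{lemma:line3}.
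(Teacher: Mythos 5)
Your proposal is correct and follows essentially the same route as the paper: the paper's own (implicit) argument is exactly Lemma~\ref{lemma:line3}'s convolution-based $O(n\log n)$ cost per elementary projection, combined with the observation from the tensoring construction that only $K\sqrt{L}$ such projections are sampled per dimension instead of $K\cdot L$. Your careful distinction between the $K\sqrt{L}$ genuinely sampled generators and the $KL$ projections obtained by expanding all repetitions is precisely the point the corollary rests on.
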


\subsection{Index traversal}
\subsubsection*{Comparisons on the fly}
In \Cref{alg:emitaggr}, maintaining the structure of line~\ref{ln:prefix-cycle} at runtime is very expensive, 
since it would require quadratic space to store the weights for, potentially, each pair of subsequences.
To deal with this problem, while maintaining the same theoretical approach, we just scan the index over each dimension and every time a unidimensional collision is seen we immediately compute $W(a,b)$, if greater or equal than the searched motif dimensionality $d$, we perform the insertion of the pair in the priority queue. This approach does not require any additional space at the cost of a slightly higher number of hash comparisons, since we may evaluate a colliding pair up to $D$ times.

\subsubsection*{Duplicate collisions within the same repetition}
Our approach dynamically chooses the length of the composite hashes by progressively iterating on the prefixes of the full ones, this results in looking into prefixes of decreasing size. Consequently, collisions at level $i$ are a \textit{superset} of the collisions at level $i+1$, in order to avoid unnecessary distance computations, the algorithm will check if the colliding hashes appear at the level $i+1$, skipping the pair in the positive case.
\subsection{Finding motifs of multiple dimensionalities}
\label{sec:multisub}
One of the critical points up to now is the fact that at input we require the number of dimensions $d$ that span the motif. In many real scenarios, it is possible to know only approximately the expected dimensionality of the pattern in a certain domain.

Our method can be easily extended to discover motifs spanning different dimensions,
allowing the user to specify a range of dimensionalities $d_{low}, d_{high}$ of the motifs they want to discover.
The data structure at line~\ref{ln:priority-queue-init} of \Cref{alg:emitaggr} becomes a set of $d_{high}-d_{low}+1$ priority queues, which are independently updated during the discovery process.
This allows the algorithm to reuse information from a single distance computation across the different requested dimensionalities. Since evaluating $\distd{\subTa,\subTb}$ for a pair of subsequences requires computing $\dist{\subTaf,\subTbf} \quad\forall f\in[D]$, we can efficiently derive $\distd{\subTa,\subTb} \quad \forall d\in[1,D]$.
The condition of line~\ref{ln:weigth-constr} is initially based on $d_{low}$ and gradually increases every time the lowest dimensionality has its motif confirmed.
\added{This approach allows to return solutions for each motif dimensionality as soon as their error probabilities satisfy the required quality, inheriting the anytime property.
}%

\begin{table}[t]
\caption{\textbf{Information about the evaluation datasets.}}
\centering
\begin{tabular}{@{}lrrrrr@{}}
\toprule
dataset          & n      & D & window & $d$ &$c_{d,1|n}$ \\ \midrule
potentials       & 2 500   & 8 &   50     &  8 & 6.11\\
evaporator       & 7 000   & 5 &   75    &   2 & 2.60\\
RUTH             & 14 859   & 32 &   500    &   4 & 3.24  \\
weather          & 100 057 & 8 &   5000     &  2 &   2.43  \\
whales           & 450 001 & 32 & 300 & 6 & 1.22*\\
quake            & 6 440 998 & 32 & 100 & 4 & 1.65* \\
electrical\_load & 6 960 008 & 10  &  1000 & 5 & 1.91*   \\ 
LTMM             & 25 132 289 & 6 & 200 & 3 & 5.31* \\
\bottomrule
\multicolumn{6}{r}{{\small * obtained through random sampling.}}
\end{tabular}
\label{tab:dataset}
\end{table}

\section{Experiments}
\label{sec:exp}
This section aims at answering the following questions:
\begin{itemize}[leftmargin= 10pt]
    \item How does our algorithm compare with the state-of-the-art?
    \item \added{How do parameters $K$, $L$, $r$ and $\delta$ influence the performance?}
    \item How does the algorithm scale with respect to the input size?
    \item Is the algorithm able to find motifs in high dimensional noisy time series?
\end{itemize}
\subsubsection*{Baselines}
Our algorithm is compared to \textsc{Mstump} \cite{Law2019}, the state-of-the-art implementation for the multidimensional matrix profile.
\added{
We stress that \textsc{Mstump} is an \emph{exact} algorithm that computes
more information than just the motifs: for instance it can be used to detect discords. Our aim is to investigate the gains that can be attained when only motifs are sought, and when a small failure probability is accepted.
}
Alongside, we consider the Extended Motif Discovery (\textsc{EMD}) algorithm \cite{tanaka2005discovery}, reimplemented in \cite{emdsilva}, \added{and the Random Projection (\textsc{RP}) algorithm \cite{4470297}, reimplemented by us, for comparisons with approximate approaches. We refer to our algorithm as \textsc{MOMENTI}, which stands for MOtifs in MultidimEnsioNal TImeseries.}
All algorithms were subject to a global timeout of 4 hours per execution for datasets under the million points, and 24 hours for larger datasets.
\subsubsection*{Experimental Setup}
The evaluation was carried out on a machine with a 8 core Intel Xeon W-2245 @ 3.90 GHz equipped with 128 GB of memory.

\subsubsection*{Datasets}
All the experiments are run on the following real datasets from different domains, whose details can be found in \Cref{tab:dataset}.
\begin{itemize}[leftmargin = 10pt]
    \item \textsc{Potentials} is a record of skin potentials %
    \cite{DaISyPREG};
    \item \textsc{Evaporator} is data from a four-stage evaporator to reduce the water content from products \cite{DaISyEVAP};
    \item \textsc{Ruth}, the Mel-spectrogram of the song \textit{Running Up That Hill} by Kate Bush, extracted with the following parameters: 32-Mel scale filters, 46 milliseconds short time Fourier transform window and 23 milliseconds hop;
    \item \textsc{Weather} is representing the hourly climate data near \replaced{Monash University, in Australia}{}, for about $10$ years \cite{godahewa_2021_5184708};
    \item \textsc{Whales} is obtained from data of an underwater passive acoustic network and is a 10 minute recording of humpback whales vocalizations \cite{NOAA_PIPAN_2021};
    \item \textsc{Quake} is a waveform from the Observatories and Research Facilities for European Seismology (ORFEUS) during the 2014 Aegean Sea earthquake \cite{quake};
    \item \textsc{El\_load} includes cleaned electrical consumption data in Watts for 20 households at aggregate and appliance level \cite{PMID:28055033};
    \item \textsc{LTMM} contains 3-day 3D accelerometer recordings of elder community residents, used to study gait, stability, and fall risk \cite{falldataset, fall}.
\end{itemize}

The choice of the reference motif dimensionality $d$ for each dataset is guided by the additional information available. Specifically, for \textsc{Potentials}, \textsc{Evaporator} and \textsc{Weather} we use metadata information for the expected dimensionality of the pattern. For \textsc{Ruth} and \textsc{Whales}, we follow previous work on audio data to find, respectively, the drum pattern of the song and whale harmonization \cite{keoghMP}. Finally, \textsc{El\_load}, \textsc{Quake} and \textsc{Ltmm} were chosen based on the domain information.

Furthermore, we characterize each dataset by the \emph{contrast} $c_{d,1|n}$ of its top $d$-dimensional motif, as per Definition~\ref{sec:contrast}.
Datasets with a small contrast are expected to be more difficult, i.e. require more time to find the top motif.
For instance \textsc{El\_load} is expected to be more difficult compared to \textsc{Weather}, since the latter has a pair of subsequences that repeat the same shape at a distance that is significantly smaller than the $n$-th closest pair.

\subsubsection*{Default parameter values}
\textsc{MOMENTI} will have its parameters set to default unless otherwise indicated.
The failure probability is set at $\delta=0.01$, the maximum hash length is set to $K = 8$, while the maximum number of repetitions is set to $L = 200$. $r$ is automatically estimated using the heuristic introduced in \Cref{sec:r_auto}.

\subsection{Finding the top motif}
\begin{table}[t]
\caption{\added{Time required to find the $d$-dimensional top motif, for fixed $d$, averaged over 9 runs. Values in parentheses are estimates.}
}
\centering
\begin{tabular}{lrrrrrr}
\toprule
dataset    &\multicolumn{2}{c}{MOMENTI} & MSTUMP    & EMD & RP    \\ 
\cmidrule{2-3}
 & Index build & Total  &&\\
\midrule
potentials & 0.11 & \cellcolor[HTML]{CDF2D9} 0.51 &   3.65     &   4.80 & 3.20 \\
evaporator & 0.16 & \cellcolor[HTML]{CDF2D9} 0.55 &  4.45 &    12.95 & 6.78\\
RUTH &  2.91 & \cellcolor[HTML]{CDF2D9} 8.10 &   84.04 & 1.5h & 2.3h\\
weather   & 15.04 & \cellcolor[HTML]{CDF2D9} 33.37 &   1035.73    &          - & 1.2h \\ 
whales   & 60.67 & \cellcolor[HTML]{CDF2D9} 2.2 h   &    (2.7 days) & - & - \\ 
quake    & 175.3 & \cellcolor[HTML]{CDF2D9} 3.6 h &  (7.2 days) & - & - \\ 
el\_load    & 180.2 & \cellcolor[HTML]{CDF2D9} 2.8 h &   (8.4 days) & - & - \\ 
LTMM    & 240.6 & \cellcolor[HTML]{CDF2D9} 15.6 h &   (11.8 days)    &  - & - \\
\bottomrule
\end{tabular}
\label{tab:time-fixed-d}
\end{table}

In this first experiment, the task is to find the top motif at a given dimensionality $d$ (as per Table~\ref{tab:dataset}).
\replaced{For each dataset we report the total running time of the baselines \textsc{Mstump}, \textsc{EMD} and \textsc{RP} in the last three columns, whereas for our approach we report both the total time and the time required to set up the index.}{}
All times are in seconds, unless otherwise noted. Furthermore, for timed out runs of \textsc{Mstump} we report, in parentheses, an estimate of the running time, which is made possible by the very regular behavior of \textsc{Mstump} with respect to the input size. For \textsc{EDM} \added{and \textsc{RP}} the running time is more unpredictable, therefore we refrain from providing estimates for timed out runs.

As can be seen, our algorithm is faster on all settings, completing the execution orders of magnitude faster than the baselines on the larger datasets.
To substantiate this observation, Table~\ref{tab:num_dist} reports the number of distance computations carried out by each approach:
\textsc{MOMENTI} usage of LSH allows to effectively prune most distance computations, whereas \textsc{Mstump} computes a quadratic number of distances.

\begin{table}[t]
\caption{{Number of cumulative distance computations to find the top $d$-dimensional motif.
}}
\begin{tabular}{lrrrr}
\hline
dataset    & MOMENTI & MSTUMP & EMD & RP \\ \hline
potentials & $64$&  \addstackgap[1.5pt]{ $2.4 \cdot 10^7$ }& $1.0\cdot 10^2$ & $2.7 \cdot 10^3$\\
evaporator & $4.7\cdot 10^2$&   $1.4\cdot 10^8$ & 5 & $4.4\cdot 10^4$\\
RUTH       & $1.2 \cdot 10^2$  &  $3.3 \cdot 10^9$ & $5.4\cdot 10^3$ &  $2.3\cdot 10^5$\\
weather    & $9.8 \cdot 10^4$   &   $3.9 \cdot 10^{10}$ & - & $2.1 \cdot 10^ 6$\\
whales    & $3.1 \cdot 10^8$   &   $(3.2 \cdot 10^{12})$ & - & - \\
quake    & $1.1 \cdot 10^6$   &   $(6.6\cdot10^{14})$ & - & - \\
el\_load & $4.4 \cdot 10^2$ & $(2.4\cdot 10^{14})$ & - & - \\
LTMM & $4.4 \cdot 10^2$ & $(1.9\cdot 10^{15})$ & - & -\\\hline
\end{tabular}
\label{tab:num_dist}
\end{table}

The \textsc{EMD} algorithm appears to be the slowest, its major drawback being the set up of the parameters for a successful discovery. \added{On the other hand, \textsc{RP} scales better to larger datasets but its parameter tuning heuristic repeatedly discards the collision matrix when a sparsification target is not met, this is  especially costly on high dimensional datasets.
} 
We stress that the Matrix Profile computed by \textsc{Mstump} allows the discovery of motifs of any dimensionality, while in this test \textsc{MOMENTI} only finds the top motif for a fixed dimensionality. In the next section
we will investigate how \textsc{MOMENTI} behaves when finding motifs for all dimensionalities at the same time.

Finally, Table~\ref{tab:space} reports the memory usage, in gigabytes, of the different algorithms.
We observe that \textsc{MOMENTI} requires the least memory to execute.
\paragraph*{Quality of the motifs}
\added{
To better understand what pairs are returned by the different algorithms
we measured the Mean Absolute Relative Error (MARE) of the distances of the discovered motifs with respect to the exact ones. We found that \textsc{RP} can reach MARE values as high as 40\%, while \textsc{EMD} stays below 20\%. In contrast, our method consistently maintains MARE under 3\%, underlining how beneficial LSH, that mainly explores close pairs, is in this context. Further details on the quality of our results will be given in \Cref{sec:quality}.}

\begin{table}[t]
\caption{\textbf{Memory required to find the top $d$-dimensional motif.}}
\centering
\begin{tabular}{lrrrrr}
\hline
& \multicolumn{4}{c}{Space (GB)} & \\ \cline{2-5} 
dataset    & MOMENTI       & MSTUMP    & EMD & RP \\ \hline
potentials &    \cellcolor[HTML]{CDF2D9}     0.016&        0.028&           0.79 & 0.07\\
evaporator &    \cellcolor[HTML]{CDF2D9}     0.020&        0.030&           0.81 & 0.1\\
RUTH & \cellcolor[HTML]{CDF2D9} 0.025&        0.084&           1.19 & 0.5 \\
weather    &  \cellcolor[HTML]{CDF2D9}  0.027&       0.106&           - & 0.8\\ 
whales    &  \cellcolor[HTML]{CDF2D9}  0.24&       3.2&           - & -\\ 
quake    &  \cellcolor[HTML]{CDF2D9}  2.13&       23.0&           - & -\\ 
el\_load    &  \cellcolor[HTML]{CDF2D9} 1.50      &    7.4     &     -      & -\\ 
LTMM    &  \cellcolor[HTML]{CDF2D9}  3.00&       15.7&           - & -\\
\hline
\end{tabular}
\label{tab:space}
\end{table}
\subsection{Finding the top motifs with different dimensionalities}

\begin{table}[t]
\caption{Time required for the top motifs for all dimensionalities. Mean over 9 runs. Values in parentheses are estimates.
}
\centering
\begin{tabular}{lrr}
\toprule
dataset    & MOMENTI  & MSTUMP      \\
\midrule
potentials &  0.75  & 3.65   \\
evaporator &  0.89  & 4.45 \\
RUTH &  12.46  & 84.04 \\
weather   &  47.85   & 1035.73 \\ 
whales   &  2.5 h  &  (2.7 days) \\ 
quake    & 5.8 h  &(7.2 days)\\ 
el\_load    &  3.1 h   & (8.4 days)\\ 
LTMM    & 16 h & (11.8 days) \\
\bottomrule
\end{tabular}
\label{tab:time-multi-d}
\end{table}

\begin{figure}
    \centering
    \includegraphics[width=\linewidth]{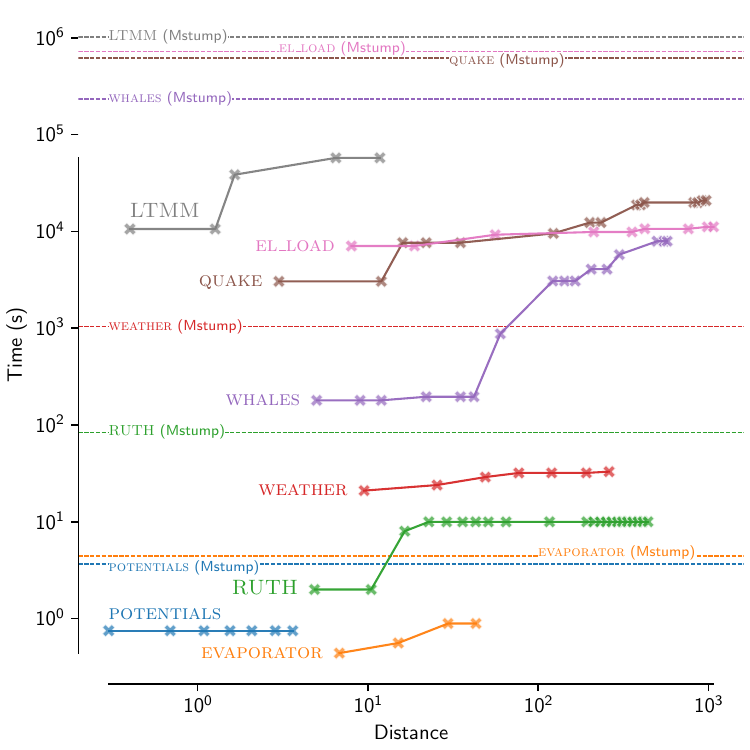}
    \caption{Solid lines mark the time required by \textsc{MOMENTI} to find the top motif of each dimensionality for all datasets; dashed lines mark the time required by the \textsc{Mstump} baseline for the same task.}
    \label{fig:multisub}
    \Description[Time required for multidimensional motif discovery in each dataset. The times are plotted with respect to the distance. Greater distances require more time.]{}
\end{figure}

In this second experiment the aim is to find the motifs for all dimensionalities $2\le d \le D$ in the same execution.
MSTUMP does this natively, whereas our algorithm \textsc{MOMENTI} can be adapted to do so as described in Section~\ref{sec:multisub}. \textsc{EMD} \added{and \textsc{RP}} can only achieve this through multiple executions and \replaced{are}{} thus excluded from this experiment.
In Table~\ref{tab:time-multi-d} we report the results for this experiment.
Note that the running times for \textsc{Mstump} are identical to the ones reported in Table~\ref{tab:time-fixed-d}, given that \textsc{Mstump} finds motifs for all dimensionalities natively.
As for \textsc{MOMENTI} the indexing time is the same in this setting as the one reported in Table~\ref{tab:time-fixed-d}, therefore we omit it for clarity.
We observe that even in this more challenging scenario our algorithm is able to discover the motifs for all dimensionalities faster than the \textsc{Mstump} baseline.
Indeed, the running time of \textsc{MOMENTI} is always within a factor $\approx 1.6$ of its running time on the one dimensional case.

To further investigate this behavior, in Figure~\ref{fig:multisub} we report the relation between the discovery time of a motif ($x$ axis) and its distance ($y$ axis). Each color identifies a dataset, and each point represents a motif for some dimensionality: for instance the \textsc{Weather} dataset (orange) has 7 points in this figure because it has 8 dimensions, and we set the algorithms to find the 7 motifs spanning between 2 and 8 dimensions.
The dashed horizontal lines mark the runtime performance of \textsc{Mstump}, which reports all the motifs at the same time at the end of its execution.
For instance, the orange dashed horizontal line reports that \textsc{Mstump} takes $\approx 10^3$ seconds on the \textsc{Weather} dataset.
The times marked by dashed lines and by the rightmost points of each solid line are the same as in Table~\ref{tab:time-multi-d}.

As can be observed in \Cref{fig:multisub}, by virtue of how the algorithm runs, motifs at a shorter distance are found earlier.
For instance, for \textsc{Weather} the motif of dimensionality $4$ has a $\operatorname{dist}_4\simeq49$ while the motif of dimensionality $5$ has $\operatorname{dist}_5\simeq 77$, which requires more iterations (and thus more time) to meet the stopping condition of the algorithm.
As we discussed earlier, this fact can be used in an \emph{anytime} fashion:
the execution can be stopped at any point after the first motifs have been returned, knowing that motifs yet to be found are at larger distances and thus might be uninteresting.
\subsection{Scalability}
\begin{figure}
    \centering
    \includegraphics[width=\linewidth]{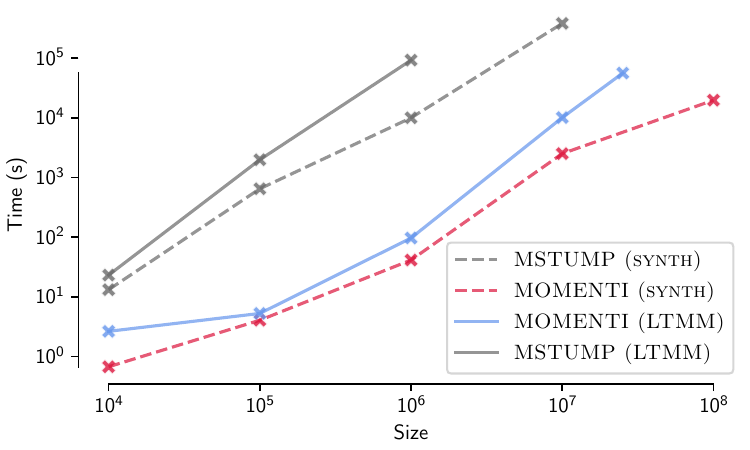}
    \caption{ Scalability vs. input size (log scale).}
    \label{fig:scalability}
    \Description[Subquadratic behaviour of \textsc{MOMENTI} with respect to \textsc{Mstump} on different input sizes.]{}
\end{figure}

We now test the scalability of \textsc{MOMENTI} compared to \textsc{Mstump}. \added{We omit EMD and RP from this comparison, as previous sections show that they do not scale beyond moderately sized time series.}
While the behavior of \textsc{Mstump} is data-independent, i.e. its running time depends only on the size of the data,
for \textsc{MOMENTI} the running time is influenced by both the size of the time series and the contrast of the motifs.
\replaced{Therefore, to test the scalability in a robust way we employ our largest dataset and a synthetic one. We generate a random walk with $D=5$ of length between 10000 and 100 million, planting a sinusoidal motif of length $w=300$ across $d=2$ dimensions perturbed with Gaussian noise such that $c_{2,1|n}$ is 1.1. For \textsc{LTMM} we pick chunks of increasing size centered around the true motif, we underline that each chunk has a different contrast, so some could be harder than others for our algorithm.
}{}
The results are reported in \Cref{fig:scalability}, where both axes use a logarithmic scale.
We observe that our algorithm scales better with the input size, with a sub-quadratic running time.

\subsection{Influence of parameters on the running time}
We now study the impact of the parameters $K$, $L$, and $r$ on the running time, that in the previous experiments were either fixed ($K=8$ and $L=200$) or estimated from the data (the quantization parameter $r$).
Remember that our algorithm finds the correct answer with probability $1-\delta$ for any setting of the parameters, that influence only the performance.
\replaced{Each experiment in the following has been repeated 10 times: the plots report the average as a colored line, additionally, a gray dashed line reports the sizes of the hash indices built under different settings. Dotted lines indicate that the result is estimated due to timeout.}{}
\paragraph{Impact of concatenations $K$}
\begin{figure}[t]
    \centering
    \includegraphics[width=\linewidth]{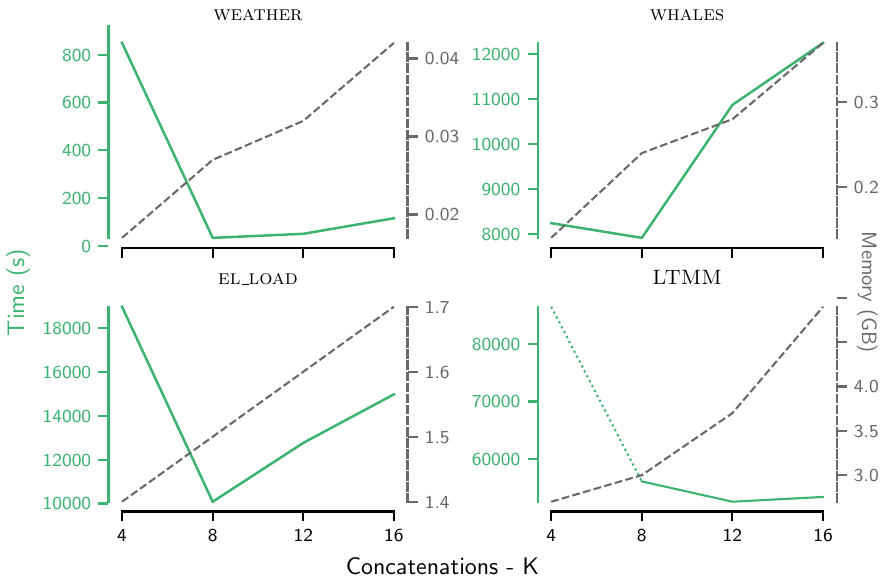}
    \caption{ \added{Time and space requirements for motif discovery at different maximum allowed values of $K$.}}
    \label{fig:K_graphs}
    \Description[All datasets show a minimum around the value 8.]{}
\end{figure}
We test $K\in\{4,8,12,16\}$, reporting the results in \Cref{fig:K_graphs}.
We observe that a higher number of concatenations is related to longer computing times,
both because more repetition of the outer loop of Algorithm~\ref{alg:emitaggr} have to be executed and because the index construction takes longer.
Conversely, using short hash values with $K=4$ incurs high execution times as well, mainly because fewer distance computations are pruned.
In all tested cases the best tradeoff is achieved with $K=8$, which is the recommended value and the one we used in all previous experiments.
We stress that for any value of $K$ considered in this section \textsc{MOMENTI} is faster than the baseline \textsc{Mstump}.

\begin{figure}[t]
    \centering
    \includegraphics[width=\linewidth]{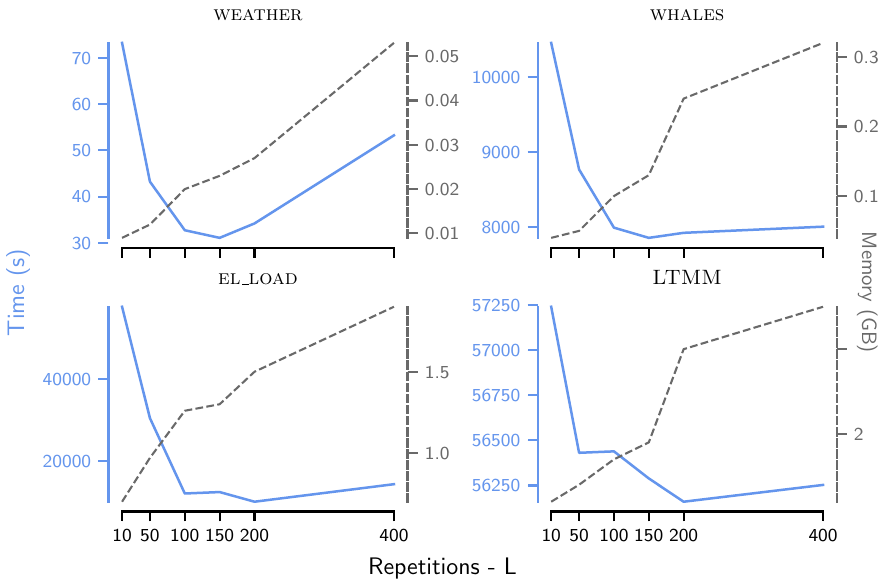}
    \caption{ \added{Time and space requirements for motif discovery at different maximum allowed values of $L$.}}
    \label{fig:l_graphs}
    \Description[Both a small and a large value of L requires more search time than the middle values]{}
\end{figure}
    
\begin{figure}[t]
    \centering
    \includegraphics[width=\linewidth]{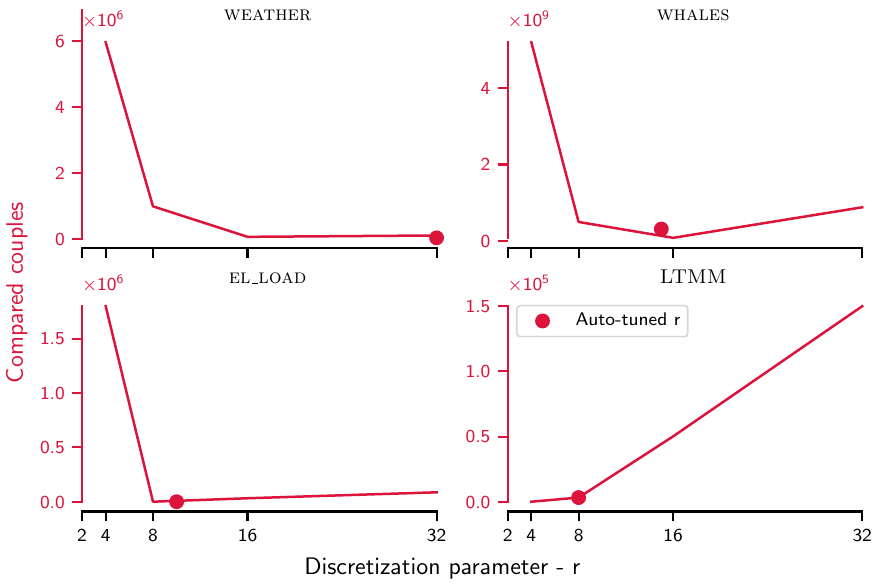}
    \caption{ \textbf{Number of comparisons for different values of $r$.} The red dot is the value found by the heuristic in \Cref{sec:r_auto}.}
    \label{fig:r_graphs}
    \Description[Each dataset shows a different behavior with respect to the r values, the one found by our heuristic is always close to the empirical minima.]{}
\end{figure}

\begin{figure}
    \centering
    \includegraphics[width=\linewidth]{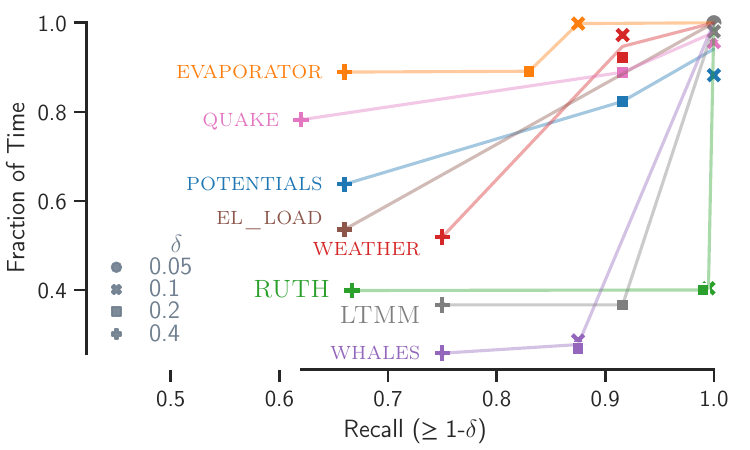}
    \caption{\added{Relation between $\delta$, represented by the markers, the measured recall and the fraction of time required with respect to the tighter bound.} }
    \label{fig:time_delta}
    \Description[All datasets have an higher measured recall with respect to the requested one. At the same time datasets RUTH, LTMM and Whales that share a high contrast behave in a similar way, at lower requested deltas the fraction of time required for the algorithm to terminate is significantly lower.]{}
\end{figure}
\paragraph{Impact of repetitions L}
We test $L$ in the range from 10 to 400, setting $K=8$:
a large $L$ will allow the stopping condition to be satisfied in earlier iteration of the outer loop of Algorithm~\ref{alg:emitaggr}, thus requiring fewer distance computations, at the expense of a higher index build time.
\Cref{fig:l_graphs} explores this trade-off, with values of $L$ between $100$ and $200$ minimizing the search time. \added{Adding more repetitions increases the memory usage, which however remains rather moderate, with $3.5$Gb being used for the largest dataset, LTMM. }
Datasets that benefit more from an increased number of repetitions are those whose relative contrast is higher (e.g. \textsc{LTMM}), because motifs are more likely to share long hash prefix.
\added{It is important to note that the sublinear behavior in memory growth is due to our use of the tensoring implementation.}
\paragraph{Impact of the quantization parameter $r$}
In this experiment, we manually fix $r\in\{4,8,12,16\}$ and compare it with the value automatically chosen by the algorithm using the procedure described in \Cref{sec:r_auto} in terms of the resulting number of distance computations, and hence running time.
Figure~\ref{fig:r_graphs} shows that the choice of $r$ has a dramatic impact on the number of distance computations. 
Remarkably, the value automatically picked by our algorithm in a data-dependent way (and used in the rest of the experiments we presented) attains the same performance as the best fixed parameter considered in this experiment.

\subsection{Quality of the results}
\label{sec:quality}
\paragraph{Impact of probability threshold $\delta$}
\added{
We test the influence of $\delta$ (that controls the algorithm's failure rate) in the range from $0.05$ to $0.4$: larger values allow the algorithm to satisfy the stopping condition earlier, improving runtime at the cost of reduced recall.
For each setting we measured the achieved recall, the relative runtime with respect to the stricter setting ($\delta=0.05$), and the mean absolute relative error (MARE) of distances of the returned motifs with respect to the ground truth.
In \Cref{fig:time_delta} we present the recall-efficiency trade-off.
We observe that the target recall imposed with $\delta$ is always satisfied, moreover, the fraction of time required with respect to our highest target is stable under similar contrast values (note the similar behavior of \textsc{LTMM}, \textsc{Weather} and \textsc{Ruth}, which share a relatively high contrast).
About the quality of the discovered motifs at lower deltas, we found that the MARE never exceeds $3\%$ for $\delta \leq 0.2$, indicating that the quality of the result is still preserved.
}

\paragraph{Impact of noise}
\added{In this last experiment we replicate the experimental design of~\cite{keoghMP},
where the task is to reliably find motifs even when datasets are cluttered by noisy dimensions.}
\added{To evaluate the robustness of our algorithm, we added from 4 to 256 additional dimensions, each being a random walk, to datasets \textsc{Potentials}, \textsc{Evaporator}, \textsc{Ruth}, \textsc{Weather}. Then we ran \textsc{MOMENTI} looking for the $d$-dimensional motif, with $d$ as per Table~\ref{tab:dataset}, repeating each experiment 12 times. In all settings, irrespective of the number of additional noisy dimensions, our algorithm attained recall values of 1, showing its robustness to the number of irrelevant dimensions.}

\section{Conclusions}
We presented a LSH based algorithm for the motif discovery problem in multidimensional time series, with strong guarantees on the quality of the results.
Experimental evidence supports the efficacy and efficiency of our approach, demonstrating that it is able to find motifs with only a fraction of all possible distance computations.

\balance
\begin{acks}
This work was supported in part by the Big-Mobility project by
the University of Padova under the Uni-Impresa call, by the MUR
PRIN 2022TS4Y3N EXPAND project, by MUR PNRR CN00000013
National Center for HPC, Big Data and Quantum Computing, and by  Marsden Fund (MFP-UOA2226).
\end{acks}

\clearpage

\bibliographystyle{ACM-Reference-Format}
\bibliography{lib}
\end{document}